\newcommand{\HS}{\mathcal{H}}	
\newcommand{\LS}{\mathcal{L}} 
\newcommand{\SBS}{\mathcal{S}} 
\newcommand*\BR{\mathcal{R}}
\newcommand{\BA}{\mathbb{A}}
\newcommand{\BB}{\mathbb{B}}
\newcommand{\I}{\mathbb{I}}
\newcommand{\FA}{\mathbf{A}}   
\newcommand{\FB}{\mathbf{B}}   
\newcommand{\Fu}{\mathbf{u}}     
\newcommand{\Fx}{\mathbf{x}}     
\newcommand{\Fv}{\mathbf{v}}     
\newcommand{\Fz}{\mathbf{z}}     
\newcommand{\Ff}{\mathbf{f}}     
\newcommand{\FQ}{\mathbf{Q}}
\newcommand{\BI}{\mathbb{I}}     
\newcommand{\BP}{\mathbb{P}}     
\newcommand{\BQ}{\mathbb{Q}}
\newcommand{\ES}{\mathcal{W}} 
\newcommand{\ESC}{\mathcal{W}_c}
\newcommand{\SA}{\mathbf{A^{[N]}}}
\newcommand{\SB}{\mathbf{B^{[N]}}}
\newcommand{\Sxt}{\mathbf{x^{[N]}_t}}
\newcommand{\DSxt}{\mathbf{\dot{x}^{[N]}_t}}
\newcommand{\Sut}{\mathbf{u^{[N]}_t}}
\newcommand{\Chi}{\mathds{1}}
\tikzset{
  agent/.style={draw, fill=blue!25!white,circle, minimum size=2mm,
  font=\footnotesize},
  x = 1.5cm, y=1.5cm,
  every loop/.style={},
}
\newtheorem*{AppImp*}{Approximate Control Implementation}
\newcommand*\TRANS{{\mathpalette\doTRANS\empty}}
\newcommand*\doTRANS[2]{\raisebox{\depth}{$\m@th#1\intercal$}}
\newcommand\MATRIX[1]{\begin{bmatrix} #1 \end{bmatrix}}
\begin{document}

\title{Subspace Decomposition for Graphon LQR: Applications to VLSNs of Harmonic Oscillators}
\author{\today}
 \author{Shuang~Gao,~\IEEEmembership{Member,~IEEE,} and Peter~E.~Caines, \IEEEmembership{Life Fellow, IEEE}%
 \thanks{*The preliminary version of this work was presented at the 58th IEEE Conference on Decisions and Control, Nice, France, 2019.}
 \thanks{*This work is supported in part by NSERC (Canada), the U.S. ARL and ARO grant W911NF1910110, and the AFOSR Grant FA9550-19-1-0138.}
 \thanks{S. Gao, and Peter E. Caines are with the Department of Electrical and Computer Engineering, McGill University,
   Montreal, QC, Canada. 
   Emails: {\tt\small \{sgao, peterc\}@cim.mcgill.ca}.}%
 }

\maketitle

\begin{abstract}
Graphon control has been proposed and developed in \cite{ShuangPeterCDC17,ShuangPeterCDC19W2,ShuangPeterTAC18} to approximately solve control problems for very large-scale networks (VLSNs) of linear dynamical systems {based on graphon limits}. 
This paper provides a solution method based on invariant subspace decompositions for a class of graphon linear quadratic regulation (LQR) problems where the local dynamics share homogeneous parameters but the graphon couplings {may be} heterogeneous among the coupled agents. Graphon couplings in this paper appear in states, controls and costs, and these couplings may be represented by different graphons. 
By exploring a common invariant subspace of the couplings, the original problem is decomposed into a network coupled LQR problem of finite dimension and a decoupled infinite dimensional LQR problem.
A centralized optimal solution, and a nodal collaborative optimal control solution {where each agent computes its part of the optimal solution locally}, are established. {The application of these solutions to finite network LQR problems may be  via (i) the graphon control methodology \cite{ShuangPeterTAC18}, or (ii)  the representation of finite LQR problems as special cases of graphon LQR problems.} The complexity of these solutions involves solving one $nd\times nd$ dimensional Riccati equation and one $n\times n$ Riccati equation, where $n$ is the dimension of each nodal agent state and $d$ is the dimension of the nontrivial common invariant subspace of the coupling operators, {whereas a direct approach involves solving an $nN \times nN$ dimensional Riccati equation, where $N$ is the size of the network.}  
For situations where the graphon couplings do not  admit exact low-rank representations, approximate control is developed based on low-rank approximations.  Finally, an application to the regulation of harmonic oscillators coupled over large networks with uncertainties is demonstrated.    
\end{abstract}

\begin{IEEEkeywords}
Graphon, graphon control,  optimal control, complex networks, large-scale systems, very large-scale networks.
\end{IEEEkeywords}

\section{Introduction}

The study of very large-scale networks (VLSNs) of dynamical agents is motivated by systems such as smart grids, the Internet of Things (IoT), 5G communications, the spread of epidemics, very large-scale robotic networks and biological neuronal networks, among others.
Furthermore, research concerning the control of dynamical systems on complex networks typically involves the following:  controllability~\cite{liu2011controllability}, control energy~\cite{pasqualetti2014controllability}, input node selection~\cite{chen2017pinning},
 low-complexity control synthesis problems with simplified objective (e.g. consensus \cite{olfati2007consensus} or synchronization \cite{arenas2008synchronization}), simplified control (e.g. pinning control \cite{chen2017pinning} and ensemble control \cite{li2011ensemble}), low-rank (e.g. mean field) coupling \cite{yong2013linear,arabneydi2016team,zecevic2005global}, or patterned coupling \cite{hamilton2012patterned}.
However, the control of dynamical processes and agents on VLSNs still requires new theories, in particular those which generate scalable solutions. 

In a recent effort to solve control problems for very large-scale networks of linear dynamical systems, graphon control has been introduced to generate scalable approximate control solutions \cite{ShuangPeterCDC17,ShuangPeterTAC18,ShuangPeterCDC19W2}.  Dynamical systems coupled over networks of arbitrary sizes may be modelled by graphon dynamical control systems based on graphon theory \cite{borgs2008convergent,borgs2012convergent,lovasz2012large} and infinite dimensional linear system theory \cite{bensoussan2007representation,curtain1995introduction}. Under this representation a limit graphon control problem is  formulated based on the limit graphon (or an estimated graphon based upon given data) and an approximate solution to the original finite network control problem is then generated \cite{ShuangPeterTAC18}. {Moreover, this graphon approximate control based on the limit graphon control solution applies to a set of network systems of arbitrary sizes in the associate convergence sequence \cite{ShuangPeterTAC18}.}
Since a limit graphon system is  infinite dimensional {in terms of the number of agents}, an important issue in the graphon control methodology is the systematic generation of control laws for the corresponding infinite dimensional limit control problem. 

This article presents a study that provides solutions to a class of such problems based on invariant subspace decompositions, which generalizes the preliminary version based on eigen-decompositions in \cite{ShuangPeterCDC19W2}. 
By exploring a common invariant subspace, the original problem is decomposed into a network coupled LQR problem of finite dimension and a decoupled infinite dimensional LQR problem. Based on this decomposition,  centralized optimal solutions with low complexity and nodal collaborative optimal control solutions which employ the projected (or aggregate) information of the states of all agents and the information of the nodal state are established. 

 {The main contributions of this work include the following: First, a solution method for a more general class of graphon LQR problems than in \cite{ShuangPeterCDC19W2} is established. More specifically, 
   the coupling operators are only required by Assumption (A5) to share a common (finite dimensional) invariant subspace and do not need to share the same eigenfunctions as in \cite{ShuangAdityaCDC19,ShuangPeterCDC19W2}}. 
    {It is worth noting that although the graphon LQR problem in this paper is infinite dimensional, the framework and the solution method apply to finite dimensional problems of arbitrary sizes since these are special cases of graphon LQR problems.} 
      {Second, the work in the paper demonstrates the powerful role of Assumption (A5) in enabling  low-complexity and scalable solutions for linear quadratic regulation problems.} Finally, a new approximate control is introduced to generate control solutions to graphon LQR problems with general graphon couplings  {which are not necessarily low-rank}, and it can be implemented directly on networks of finite sizes and  allow for uncertainties in the coupling structures.

The key idea for generating the low complexity solutions is to decouple the original linear quadratic control problems and formulate equivalent problems of low complexity. 
The solution idea was used for linear quadratic mean-field control problems  in \cite{arabneydi2015team,arabneydi2016team} (where couplings are of rank one). This idea is further generalized and applied to the control of graphon and network coupled systems in \cite{ShuangPeterCDC19W2, ShuangAdityaCDC19}. 
A related recent paper \cite{arabneydi2019deep} discusses decoupling linear quadratic control problems based on state transformations to formulate equivalent problems and applies it to solve risk-sensitive linear quadratic mean-field control problems. Another closely related work \cite{hamilton2012patterned} studies linear control systems with a shared pattern (or network) structure in state, input and output transformations and the corresponding control synthesis problem.

This paper is organized as follows: Section II introduces graphon control systems, their relations to finite network systems, and graphon LQR problems. Section III discusses the invariant subspace of bounded linear operators. In Section IV, the solution method for graphon LQR problems via invariant subspace decompositions are presented. Section V and Section VI establish the properties of the optimal exact control and the approximate control. Section VII presents the application of the solution method to the regulation of coupled harmonic oscillators on graphs with uncertainties.  {Section VIII discusses the complexity of the solution method and Section IX presents topics for future work.} 
\subsubsection*{Notation}
 $\BR$ and $\BR_+$ denote the set of all real numbers and that of all positive reals respectively. Bold face letters (e.g. $\FA$, $\FB$, $\Fu$) are used to represent graphons, compact operators and functions. Blackboard bold letters (e.g. $\BA$, $\BB$) are used to denote linear operators which are not necessarily compact. We use $\FA^\TRANS$ to denote the adjoint operator of $\FA$. Let $\I$ denote the identity operator for infinite dimensional Hilbert spaces and $I$ denote the identity matrix.   We use $\langle\cdot ,  \cdot \rangle$ and $\|\cdot \|$ to represent respectively inner product and norm.
For any $c\in \BR_+$, let $\ES_c$ denote the set of all bounded symmetric measurable functions $\FA: [0,1]^2 \rightarrow [-c,c]$. In this paper, an element in $\ES_c$ is called a ``graphon''.
Any $\FA \in \ES_c$ can be interpreted as a linear operator from $L^2{[0,1]}$ to $L^2{[0,1]}$ (see e.g. \cite{ShuangPeterTAC18}) and the same notation $\FA$ is also used to represent the associated graphon operator. For a Hilbert space $\HS$, $\mathcal{L}(\HS)$ denotes the set of all bounded linear operators from $\HS$ to $\HS$. Let $\otimes$ denote matrix Kronecker product; more explicitly,  the Kronecker product of $A=[a_{ij}]\in \BR^{n\times n}$ and $B=[b_{ij}] \in \BR^{m\times m}$ is given by   $$A\otimes B = \MATRIX{a_{11} B&\dots &a_{1n} B \\
\vdots & \ddots & \vdots\\
a_{n1}B & \dots&a_{nn}B} \in \BR^{nm\times nm}.$$
Finally, let $\oplus$ denote direct sum.

\section{Graphon LQR Problems}
\subsection{State space and operators}
Consider the space  $$(L^2[0,1])^n \triangleq \underbrace{L^2[0,1]\times\ldots \times L^2[0,1]}_{n} $$ with the inner product defined as follows: for $\Fv, \Fu \in (L^2[0,1])^n$,
\begin{equation}
 	\langle\Fu ,\Fv \rangle \triangleq \int_0^1 \langle \Fv(\alpha), \Fu(\alpha) \rangle_{_{\BR^n}} d\alpha = \sum_{i=1}^n \langle \Fv_i, \Fu_i\rangle_{_{L^2[0,1]}}
 \end{equation} 
 where $\Fu_i(\cdot) \in L^2[0,1]$ and $\Fu_{(\cdot)}(\alpha) \in \BR^n$ with $i\in\{1,...,n\}$ and $\alpha \in [0,1]$. %
The corresponding induced norm is given by 
\begin{equation*}
	\begin{aligned}
		 \|\Fv\|_{(L^2[0,1])^n} 
		& = \left(\int_0^1 \|\Fv(\alpha)\|_{\BR^n}^2 d\alpha\right)^\frac12
		= \left(\sum_{i=1}^n \|\Fv_i\|^2_{_{L^2{[0,1]}}} \right)^\frac12.
	\end{aligned}
\end{equation*}

{The space $(L^2[0,1])^n$ with the above inner product is a Hilbert space.}

Consider any $D\in \BR^n$ and $\FA \in \ESC$. 
 For any $\Fv \in \left(L^2[0,1]\right)^n$, the operator $D\FA \in \mathcal{L}\left(\left(L^2[0,1]\right)^n\right)$ is defined by the following linear operation
 \begin{equation}\label{eq:operation-vec}
 \begin{aligned}
     	([D\FA]\Fv)(\alpha) & = D \MATRIX{\int_{[0,1]} \FA(\alpha, \beta) \Fv_1(\beta) d\beta \\
 	 \vdots \\
 	\int_{[0,1]} \FA(\alpha, \beta) \Fv_n(\beta) d\beta } 
 	\\
 	&= D \int_0^1 \FA(\alpha, \beta) \Fv(\beta) d\beta, \quad \forall \alpha \in [0,1].
 \end{aligned}
 \end{equation}
For the identity operator $\BI$, the operation of $D \BI \in \mathcal{L}\left(\left(L^2[0,1]\right)^n\right)$ is defined by 
\begin{equation}\label{eq:identity-operation-vec}
	([D\BI]\Fv)(\alpha) = D \MATRIX{ \Fv_1(\alpha) \\
 	 \vdots \\
 	 \Fv_n(\alpha) } 
 	= D \Fv(\alpha),  \quad \forall \alpha \in [0,1]. 
 \end{equation}
Let 
$L^2([0,T];(L^2[0,1])^n)$ denote the Hilbert space of equivalence classes of strongly measurable (in the B\"ochner sense \cite[p.103]{showalter2013monotone}) mappings $[0,T] \rightarrow (L^2[0,1])^n$ that are integrable with norm
$
\| \Fx \|_{L^2([0,T];(L^2[0,1])^n)} =\ (\int_0^T \|\Fx(\cdot, s)\|^2_{(L^2[0,1])^n} ds)^{{1/2}}.
$
Based on the definitions of the operations in \eqref{eq:operation-vec} and \eqref{eq:identity-operation-vec}, the $k$th $(k\geq 1)$ powers of $D\FA$ and $ D \BI$ are respectively given by 
\begin{equation}
	(D\FA)^k =  D^k \FA^k \quad \text{and} \quad (D\BI)^k =  D^k \BI^k = D^k \BI.
\end{equation}

Let $\BA = [L_{\textup{a}} \BI + D_{\textup{a}} \FA]$ with $L_{\textup{a}},D_{\textup{a}} \in \BR^{n\times n}$ and $\FA \in \ESC$. Clearly, $\BA$ is a bounded linear operator from $\left(L^2[0,1]\right)^n$ to $\left(L^2[0,1]\right)^n$. 
 Following \cite{pazy1983semigroups}, $\BA$ is  the infinitesimal generator of the uniformly (hence strongly)  continuous semigroup 
$S_\BA(t)\triangleq e^{\BA t}=\sum_{k=0}^{\infty} \frac{t^k\mathbf{\BA}^k}{k!}, ~ 0\leq t <\infty.$
Therefore, the initial value problem of the graphon  differential equation
\begin{equation}
	\mathbf{\dot{y}_t}={\BA\mathbf{y}_t},\quad \mathbf{y_0} \in \left(L^2[0,1]\right)^n \label{equ:Noncompact-Operator-Differential-Equation}, \qquad  0\leq t <\infty,
\end{equation}
is well defined and has a  solution given by 
$
	\mathbf{y}_t=e^{{\BA}t}\mathbf{y}_0.
$
\begin{lemma}
If  $n\times n$ dimensional matrices $L_{\textup{a}}$ and $D_{\textup{a}}$ commute, then 
$$e^{\BA t} = e^{L_{\textup{a}} \BI  t } e^{D_{\textup{a}} \FA t} = e^{L_{\textup{a}} t}  e^{D_{\textup{a}} \FA t},\quad \forall  t \in \BR , ~ \forall \FA \in \ESC.$$
\end{lemma}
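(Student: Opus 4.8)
The plan is to reduce the claim to the elementary fact that commuting bounded linear operators on a Hilbert space have factorizing exponentials, and then to identify the operator $e^{L_{\textup{a}} \BI t}$ with pointwise multiplication by the matrix exponential $e^{L_{\textup{a}} t}$. First I would note that, since $\FA \in \ESC$, both $L_{\textup{a}}\BI$ and $D_{\textup{a}}\FA$ belong to $\mathcal{L}\big((L^2[0,1])^n\big)$, so $\BA = L_{\textup{a}}\BI + D_{\textup{a}}\FA$ is a sum of two bounded operators and its exponential series converges in operator norm for every real $t$ (in particular, $\{e^{\BA t}\}$ is a group on all of $\BR$, not merely a semigroup on $[0,\infty)$).

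The crucial step is to check that these two summands commute. Using the definitions in \eqref{eq:operation-vec}--\eqref{eq:identity-operation-vec}, for any $\Fv \in (L^2[0,1])^n$ and a.e.\ $\alpha \in [0,1]$,
\begin{equation*}
\big([L_{\textup{a}}\BI][D_{\textup{a}}\FA]\Fv\big)(\alpha) = L_{\textup{a}} D_{\textup{a}} \int_0^1 \FA(\alpha,\beta)\Fv(\beta)\,d\beta,
\end{equation*}
whereas, pulling the scalar kernel $\FA(\alpha,\beta)$ past the matrix factors,
\begin{equation*}
\big([D_{\textup{a}}\FA][L_{\textup{a}}\BI]\Fv\big)(\alpha) = D_{\textup{a}} \int_0^1 \FA(\alpha,\beta)\,L_{\textup{a}}\Fv(\beta)\,d\beta = D_{\textup{a}} L_{\textup{a}} \int_0^1 \FA(\alpha,\beta)\Fv(\beta)\,d\beta.
\end{equation*}
Hence $[L_{\textup{a}}\BI][D_{\textup{a}}\FA] = [D_{\textup{a}}\FA][L_{\textup{a}}\BI]$ exactly when $L_{\textup{a}} D_{\textup{a}} = D_{\textup{a}} L_{\textup{a}}$, which is the hypothesis.

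Next I would invoke the standard Banach-algebra identity: if $P,Q \in \mathcal{L}(\HS)$ satisfy $PQ = QP$, then $e^{P+Q} = e^P e^Q$, which is obtained by the Cauchy-product (Mertens) rearrangement of the absolutely convergent series $\sum_k P^k/k!$ and $\sum_k Q^k/k!$ in $\mathcal{L}(\HS)$. Applying this with $P = (L_{\textup{a}}\BI)\,t$ and $Q = (D_{\textup{a}}\FA)\,t$ yields $e^{\BA t} = e^{L_{\textup{a}}\BI t}\,e^{D_{\textup{a}}\FA t}$ for all $t \in \BR$. Finally, from the power rule $(L_{\textup{a}}\BI)^k = L_{\textup{a}}^k\BI$ recorded in the excerpt,
\begin{equation*}
e^{L_{\textup{a}}\BI t} = \sum_{k\ge 0}\frac{t^k}{k!}L_{\textup{a}}^k\BI = \Big(\sum_{k\ge 0}\frac{t^k}{k!}L_{\textup{a}}^k\Big)\BI = \big(e^{L_{\textup{a}} t}\big)\BI,
\end{equation*}
i.e.\ the operator acting as $\Fv(\alpha) \mapsto e^{L_{\textup{a}} t}\Fv(\alpha)$, denoted simply $e^{L_{\textup{a}} t}$ under the paper's convention. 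Combining the two displays gives $e^{\BA t} = e^{L_{\textup{a}}\BI t}\,e^{D_{\textup{a}}\FA t} = e^{L_{\textup{a}} t}\,e^{D_{\textup{a}}\FA t}$. I do not anticipate any genuine obstacle here; the only points needing a little care are the operator-level commutation computation above and the justification of the series rearrangement, both of which are routine once one works in the Banach algebra $\mathcal{L}\big((L^2[0,1])^n\big)$.
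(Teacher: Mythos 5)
Your proof is correct and follows exactly the route the paper intends: the paper's proof is the one-line remark that the argument mirrors the matrix-exponential case with bounded-operator semigroups, and you have simply filled in the details (the commutation of $[L_{\textup{a}}\BI]$ and $[D_{\textup{a}}\FA]$ via the scalar kernel, the Cauchy-product identity $e^{P+Q}=e^Pe^Q$ for commuting bounded operators, and the identification $e^{L_{\textup{a}}\BI t}=e^{L_{\textup{a}}t}\BI$). No discrepancies.
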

The proof follows that of the matrix exponential case by replacing the definition of matrix exponentials by semigroups corresponding to  bounded linear operators. 

For details on graphons, graphon operators, graphon spaces and the associated cut metric, readers are referred to \cite{lovasz2012large,ShuangPeterTAC18}. 

\subsection{Linear graphon dynamical systems}

\begin{definition}[{Graphon Dynamical Systems}]
  {The graphon dynamical system model is given as follows: 
\begin{equation} \label{eq:dynamics}
	 \begin{aligned}
		&\mathbf{\dot{x}_t}= \BA \Fx_t + \BB \Fu_t,~~ t \in [0,T], 
		\end{aligned}
\end{equation}
with $\BA = [L_{\textup{a}} \BI + D_{\textup{a}}\FA]$ and $\BB = [L_{\textup{b}} \BI +D_{\textup{b}}\FB]$, where  $L_{\textup{a}}$, $ L_{\textup{b}}$, $D_{\textup{a}}$  and  $D_{\textup{b}}$ are constant matrices of dimension $n\times n$,  $\FA$ and $\FB$ are graphons in $\ESC$,  and $\Fx_t \in \left(L^2[0,1]\right)^n$ and ${\Fu_t} \in \left(L^2[0,1]\right)^n$ are respectively  the system state and the control input at time  $t$. The system in \eqref{eq:dynamics} is denoted by $(\BA;\BB)$}
\end{definition}

 Let $C([0,T];(L^2{[0,1]})^n)$ denote the set of continuous mappings from $[0,T]$ to $(L^2{[0,1]})^n$.
A  solution $\Fx \in L^2([0,T];(L^2{[0,1]})^n)$  is called a {\it mild solution} of  \eqref{eq:dynamics} if 
$	\Fx_t=e^{(t-a){\BA}}{\Fx_a}+\int_a^te^{(t-s)\BA}\BB\Fu_sds$
for all $a\leq t$ in $[0,T]$.

\begin{proposition}The system $(\BA; \BB)$ in
\eqref{eq:dynamics} has a unique  mild solution $\mathbf{x}\in C([0,T];(L^2{[0,1]})^n)$ for any $\mathbf{x_0} \in (L^2{[0,1]})^n$ and any $\mathbf{u}\in L^2([0,T]; (L^2{[0,1]})^n)$.
\end{proposition}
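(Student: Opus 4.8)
The plan is to apply the standard semigroup-theoretic existence/uniqueness theory for abstract linear evolution equations with bounded input operators, since the excerpt has already established the two ingredients that make this routine: namely, that $\BA = [L_{\textup{a}}\BI + D_{\textup{a}}\FA]$ is a bounded linear operator on the Hilbert space $(L^2[0,1])^n$, and hence is the infinitesimal generator of a uniformly continuous semigroup $S_\BA(t) = e^{\BA t}$; and that $\BB = [L_{\textup{b}}\BI + D_{\textup{b}}\FB]$ is likewise a bounded linear operator on $(L^2[0,1])^n$. With boundedness of $\BA$ in hand, the semigroup $\{e^{\BA t}\}_{t\in[0,T]}$ satisfies $\|e^{\BA t}\| \le e^{\|\BA\|T} =: M$ for all $t\in[0,T]$, which is the uniform bound that drives every estimate below.

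First I would define the candidate solution by the variation-of-constants formula
\begin{equation*}
	\Fx_t = e^{t\BA}\Fx_0 + \int_0^t e^{(t-s)\BA}\BB\Fu_s\,ds, \qquad t\in[0,T],
\end{equation*}
and verify it is well defined: the integrand $s \mapsto e^{(t-s)\BA}\BB\Fu_s$ is strongly measurable (composition of the strongly continuous map $s\mapsto e^{(t-s)\BA}$, the bounded operator $\BB$, and the strongly measurable $\Fu$) and its norm is bounded by $M\|\BB\|\,\|\Fu_s\|_{(L^2[0,1])^n}$, which is integrable on $[0,T]$ since $\Fu\in L^2([0,T];(L^2[0,1])^n)\subset L^1([0,T];(L^2[0,1])^n)$; hence the Böchner integral exists. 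Next I would check continuity in $t$, i.e. $\Fx\in C([0,T];(L^2[0,1])^n)$: the first term $e^{t\BA}\Fx_0$ is continuous because the semigroup is strongly (indeed uniformly) continuous, and for the convolution term one splits the difference $\Fx_{t+h}-\Fx_t$ into the part where $e^{(t-s)\BA}$ is evaluated over the old interval (which is small by strong continuity of the semigroup plus dominated convergence, using the integrable dominating function $M\|\BB\|\,\|\Fu_s\|$) and the part over the new sliver $[t,t+h]$ (which is small because it is the integral of an $L^1$ function over a short interval). This also confirms $\Fx\in L^2([0,T];(L^2[0,1])^n)$, so the mild-solution notion from the excerpt applies. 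For the semigroup/integral identity on a general subinterval $[a,t]\subset[0,T]$, I would use the semigroup property $e^{(t-a)\BA}e^{(a-s)\BA} = e^{(t-s)\BA}$ together with $e^{(t-a)\BA}\Fx_a = e^{(t-a)\BA}e^{a\BA}\Fx_0 + \int_0^a e^{(t-a)\BA}e^{(a-s)\BA}\BB\Fu_s\,ds$, and then recognize $\Fx_t$ minus this expression as $\int_a^t e^{(t-s)\BA}\BB\Fu_s\,ds$.

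For uniqueness, suppose $\Fx$ and $\Fy$ are two mild solutions with the same data; then $\Fz_t \DEFINED \Fx_t - \Fy_t$ satisfies $\Fz_t = \int_0^t e^{(t-s)\BA}\BB\Fz_s\,ds$, whence $\|\Fz_t\|_{(L^2[0,1])^n} \le M\|\BB\|\int_0^t \|\Fz_s\|_{(L^2[0,1])^n}\,ds$, and Grönwall's inequality forces $\Fz_t \equiv 0$ on $[0,T]$. I do not expect a genuine obstacle here: the essential point is simply that $\BA$ and $\BB$ are \emph{bounded} operators (which the excerpt has already shown via the explicit operator definitions in \eqref{eq:operation-vec} and \eqref{eq:identity-operation-vec}), so one never needs the more delicate unbounded-generator machinery; the only mild care is bookkeeping with Böchner integrability and strong measurability of the integrand, and invoking that $L^2([0,T];\cdot) \hookrightarrow L^1([0,T];\cdot)$ on the finite interval $[0,T]$. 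Accordingly I would simply cite the classical result — e.g. \cite{curtain1995introduction} or \cite{pazy1983semigroups} — for bounded (or more generally strongly continuous) generators and bounded input operators, and note that it specializes immediately to the present setting.
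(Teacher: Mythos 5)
Your proposal is correct and follows essentially the same route as the paper, which simply notes that $\BA$ generates a strongly continuous semigroup and $\BB$ is bounded, then cites the standard mild-solution theory (in \cite[p.385]{bensoussan2007representation}). You have merely unpacked the details of that classical argument (variation of constants, B\"ochner integrability, continuity, Gr\"onwall) before citing equivalent references.
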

\begin{proof}
Since $\BA$ generates a strongly continuous semigroup and $\BB$ is a bounded linear operator on $(L^2{[0,1]})^n$, we obtain this result following \cite[p.385]{bensoussan2007representation}.
\end{proof}
\subsection{Relation to finite dimensional network systems}\label{subsec:rel-finite-with-graphon}
\begin{definition}[{Network Systems}]
  Consider a network of $N$ agents with the following dynamics
\begin{equation} \label{eq:nodal-dynamics}
	\dot x^i_t= L_{\textup{a}} x_t^i + D_{\textup{a}} z_t^i + L_{\textup{b}} u_t^i + D_{\textup{b}} v_t^i, \quad t \in [0,T]
\end{equation}
where $x_t^i\in \BR^n$ is the state of node $i$, $u_t^i \in \BR^n$ represents the control of node $i$, and $L_{\textup{a}}, L_{\textup{b}}, D_{\textup{a}}, D_{\textup{b}} \in \BR^{n\times n}$ {are constant matrices shared by the agents}; here the network coupling of states and that of controls are given by
$$z^i_t = \frac{1}{N}\sum_{j=1}^N a_{ij}x_t^j\quad \text{and} \quad v_t^i = \frac{1}{N}\sum_{j=1}^N b_{ij}u_t^j,$$
where  {$|a_{ij}|\leq c$ and $|b_{ij}|\leq c$}.
\end{definition}
 Note that problems with $m$ control inputs $(m<n)$ for the nodal dynamics in \eqref{eq:nodal-dynamics} can be considered by filling zeros into columns (with indices between $m$ and $n$) of $L_{\textup{b}}$ and $D_{\textup{b}}$ .

Consider the uniform partition $\{P_1, \ldots,P_N\}$ of $[0,1]$ with $P_1 =[0,\frac{1}{N}]$ and $P_k =(\frac{k-1}{N},\frac{k}{N}]$ for $2\leq k\leq N$.   
Define the step function graphon $\SA$ associated with $A_N \triangleq [a_{ij}]$ as
\begin{equation*}
	\SA(\vartheta,\varphi) = \sum_{i=1}^{N} \sum_{j=1}^{N} \Chi_{_{P_i}}(\vartheta)\Chi_{_{P_j}}(\varphi)a_{ij},  \quad (\vartheta,\varphi) \in [0,1]^2,
\end{equation*}
where $\Chi_{_{P_i}}(\cdot)$ represents the indicator function, that is, $\Chi_{_{P_i}}(\vartheta)=1$ if $\vartheta\in P_i$ and $\Chi_{_{P_i}}(\vartheta)=0$ if $\vartheta\notin P_i$. Similarly, define $\SB$ based on $B_N\triangleq [b_{ij}]$.
Let the piece-wise constant function $\Sxt \in (L^2{[0,1]})^n$ corresponding to $x_t \triangleq [{x_t^1}^\TRANS,...,{x_t^N}^\TRANS ]^\TRANS \in \BR^{nN}$ be given by $\Sxt (\vartheta) =\sum_{i=1}^N {\Chi}_{_{P_i}}(\vartheta) x_t^i$, for all $\vartheta \in [0,1].$ 
Similarly define $\Sut \in (L^2{[0,1])^n}$ that corresponds to $u_t\triangleq [{u_t^1}^\TRANS,...,{u_t^N}^\TRANS ]^\TRANS \in \BR^{nN}$.
\begin{figure}[htb]
    \centering
    \includegraphics[width=8.5cm, trim = {23cm 0 0cm 0}, clip]{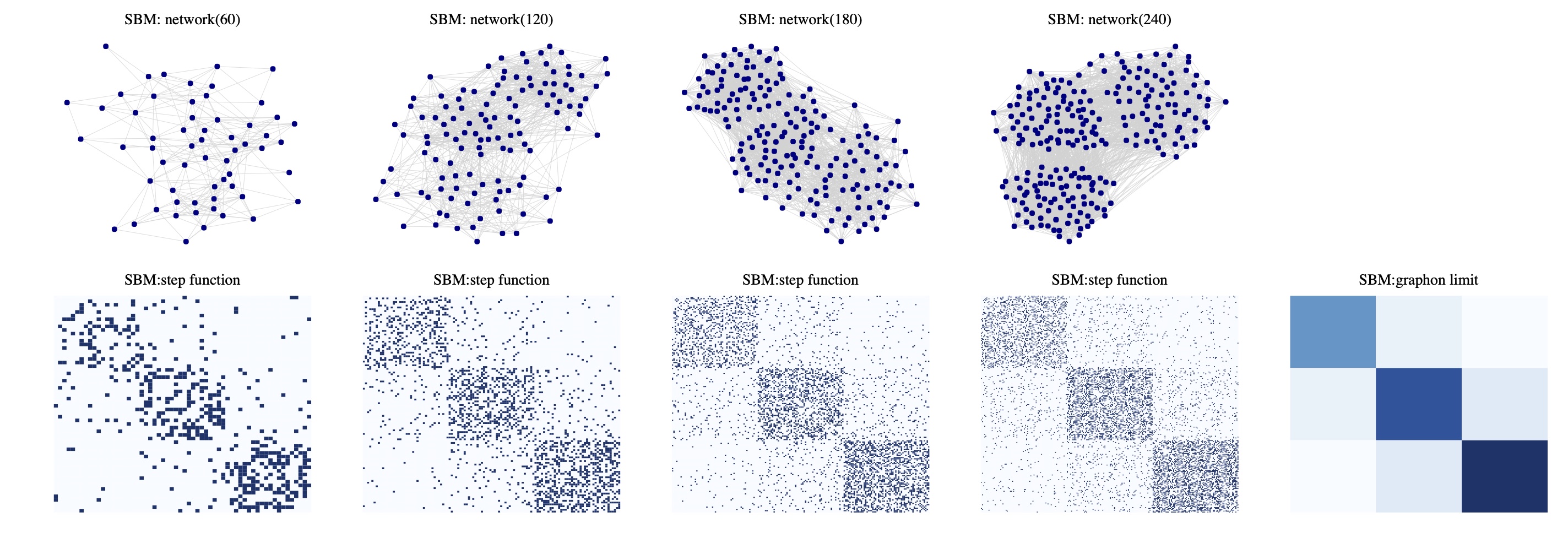}
    \caption{Random graphs generated from a stochastic block model \cite{airoldi2013stochastic}, their {step functions} and the graphon limit. {The underlying stochastic model is characterized by connection probabilities $[0.25, ~ 0.05,~ 0.02;~ 
    0.05,  ~0.35, ~ 0.07;~
    0.02, ~   0.07,~ 0.4]$.} }
    \label{fig:graphs-graphon}
\end{figure} 

Then the corresponding graphon dynamical system for the network system in \eqref{eq:nodal-dynamics} is given by 
\begin{equation}
	\begin{aligned} \label{equ:step-function-dynamical-system}
	&\DSxt= (L_{\textup{a}} \BI+D_{\textup{a}}\SA) \Sxt+(L_{\textup{b}}\BI+D_{\textup{b}}\SB) \Sut,\\ 
	& t\in[0,T],\quad \Sxt, \Sut \in (L^2_{pwc}{{[0,1]}})^n,
	\end{aligned}
\end{equation}
 where   $\SA, \SB \in \ESC$ represent the corresponding graph (i.e. step function graphon) couplings and  $(L^2_{pwc}{{[0,1]}})^n$ represents the set of all piece-wise constant (over each element $P_i$ of the uniform partition) functions  in $(L^2{[0,1]})^n$.

{
 The trajectories of the graphon dynamical system in \eqref{equ:step-function-dynamical-system} correspond one-to-one to the trajectories of the network system in \eqref{eq:nodal-dynamics}. 
{Clearly \eqref{equ:step-function-dynamical-system} is a special case of the graphon dynamical system in \eqref{eq:dynamics}. Therefore any (finite dimensional) network system of arbitrary size in \eqref{eq:nodal-dynamics} can be represented by the graphon dynamical system in \eqref{eq:dynamics}.}
Moreover, the system in \eqref{eq:dynamics} can represent the limit system for a sequence of network systems represented in the form of \eqref{equ:step-function-dynamical-system} when the underlying step function graphon sequences convergence in the operator norm or $L^2{[0,1]^2}$ metric. 
}

\subsection{Optimal control problem} \label{sec:ocp}
The control objective is to obtain the control law $\Fu \in L^2([0,T]; (L^2{[0,1]})^n)$ that minimizes the following cost
\begin{equation}\label{eq:cost}
	J(\Fu) = \int_0^T \left(\langle\Fx_t, \BQ \Fx_t \rangle+ \langle\Fu_t,  \Fu_t \rangle \right) dt + \langle \Fx_T, \BQ_{_T} \Fx_T\rangle,
\end{equation}
where  $\BQ, \BQ_T \in \LS{\left((L^2{[0,1]})^n \right)}$,
subject to the system dynamics in \eqref{eq:dynamics}. 
Consider the following assumption:
\begin{description}
	\item[(A1)] The bounded linear operators $\BQ$ and $\BQ_{_T}$ in $\LS{\left((L^2{[0,1]})^n \right)}$ are Hermitian and non-negative, that is, $\BQ^\TRANS = \BQ$,  $\BQ_T^\TRANS = \BQ_T$ and for any $\Fv\in( L^2[0,1])^n$,  $\langle \Fv,\BQ \Fv\rangle\geq 0,\langle \Fv, \BQ_{_T}\Fv\rangle \geq 0$.
\end{description}
The optimal control problem can be solved via dynamic programming which gives rise to the following Riccati equation: 
\begin{equation}\label{equ: noncompact-Riccati-Equation}
	- \dot{\BP}=\BA^\TRANS \BP+\BP \BA- \BP\BB\BB^\TRANS \BP + \BQ, \quad \BP(T)= \BQ_{_T}.
\end{equation}
%
%
Given the solution $\BP$ to the Riccati equation, the optimal control $\Fu^*\triangleq \{\Fu^*_t, t\in[0,T]\}$ is given by 
\begin{equation} \label{equ: noncompact-feedback-control-law}
	\Fu^*_t=-\BB^\TRANS \BP(t) \Fx^*_t, \quad t\in [0, T]
\end{equation}
and moreover $\Fx^*\triangleq \{\Fx^*_t, t\in[0,T]\}$ is the solution to the closed loop equation 
\begin{equation} \label{equ: Closed-Loop-System-Non-Compact}
	\begin{aligned}
	& \dot{\Fx}_t=\big(\BA- \BB \BB^\TRANS \BP(t)\big) \Fx_t, 
	& t\in [0,T], \Fx_0 \in (L^2{[0,1]})^n.
	\end{aligned}
\end{equation}
See \cite{bensoussan2007representation} for more details and notice that we reverse the time for the Riccati equation in \cite{bensoussan2007representation}.

\begin{proposition}[{{\cite[Part IV]{bensoussan2007representation}}}] 
\label{prop:existence-unique-opt-sol}
	Under the assumption \textup{(A1)}, there exists a unique solution to the Riccati equation  \eqref{equ: noncompact-Riccati-Equation} and furthermore there exists a unique optimal solution pair $(\Fu^*, \Fx^*)$ as given in \eqref{equ: noncompact-feedback-control-law} and \eqref{equ: Closed-Loop-System-Non-Compact}.
\end{proposition}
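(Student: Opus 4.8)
The plan is to follow the classical verification-theorem route for infinite-dimensional linear--quadratic regulation, specialized to the present setting where $\BA = [L_{\textup{a}}\BI + D_{\textup{a}}\FA]$ is a \emph{bounded} operator (so $S_\BA(t)=e^{\BA t}$ is uniformly continuous and $\mathcal D(\BA)$ is all of $(L^2[0,1])^n$) and the control penalty is the identity, hence coercive. First I would establish well-posedness of the operator Riccati equation \eqref{equ: noncompact-Riccati-Equation}. Writing it in mild (integral) form,
\begin{equation*}
\BP(t) = S_\BA(T-t)^\TRANS \BQ_{_T} S_\BA(T-t) + \int_t^T S_\BA(s-t)^\TRANS\big(\BQ - \BP(s)\BB\BB^\TRANS\BP(s)\big)S_\BA(s-t)\,ds,
\end{equation*}
the right-hand side defines, on a short interval $[T-\delta,T]$ with $\delta$ small, a contraction on a suitable closed ball in $C([T-\delta,T];\LS((L^2[0,1])^n))$, because $\BP\mapsto\BP\BB\BB^\TRANS\BP$ is Lipschitz on bounded sets; this yields a unique local norm-continuous solution. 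The map preserves self-adjointness and it propagates $\BP(t)\geq 0$ backward in time, since $\BQ,\BQ_{_T}\geq 0$ and $S_\BA(\cdot)^\TRANS(\,\cdot\,)S_\BA(\cdot)$ preserves positivity. To extend to all of $[0,T]$ I would use the comparison $0\leq \BP(t)\leq \BP_0(t)$, where $\BP_0$ solves the \emph{linear} operator equation obtained by deleting the term $-\BP\BB\BB^\TRANS\BP\leq 0$; the quantity $\BP_0(t)=S_\BA(T-t)^\TRANS\BQ_{_T}S_\BA(T-t)+\int_t^T S_\BA(s-t)^\TRANS\BQ\,S_\BA(s-t)\,ds$ is globally defined and uniformly bounded on $[0,T]$ because $S_\BA$ is uniformly bounded there. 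This a priori bound rules out finite-time blow-up, so the local solution extends uniquely to $[0,T]$ and is norm-continuous.

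Second, with $\BP(\cdot)$ in hand, the closed-loop generator $\BA-\BB\BB^\TRANS\BP(t)$ is a norm-continuous, uniformly bounded perturbation of $\BA$; standard perturbation theory then gives a unique mild solution $\Fx^*$ of \eqref{equ: Closed-Loop-System-Non-Compact} for every $\Fx_0$, and $\Fu^*_t = -\BB^\TRANS\BP(t)\Fx^*_t$ lies in $L^2([0,T];(L^2[0,1])^n)$. Optimality and uniqueness both come from completion of squares: for an arbitrary admissible $\Fu$ with mild solution $\Fx$, I would differentiate $t\mapsto\langle\Fx_t,\BP(t)\Fx_t\rangle$, substitute the Riccati equation \eqref{equ: noncompact-Riccati-Equation} together with the terminal condition $\BP(T)=\BQ_{_T}$, and integrate over $[0,T]$ to obtain
\begin{equation*}
J(\Fu) = \langle\Fx_0,\BP(0)\Fx_0\rangle + \int_0^T \big\| \Fu_t + \BB^\TRANS\BP(t)\Fx_t \big\|^2\,dt .
\end{equation*}
Since the integrand is non-negative and vanishes precisely when $\Fu_t=-\BB^\TRANS\BP(t)\Fx_t$ for a.e.\ $t$, the infimum equals $\langle\Fx_0,\BP(0)\Fx_0\rangle$, is attained, and the minimizing control is unique; reinserting it into the dynamics recovers \eqref{equ: noncompact-feedback-control-law}--\eqref{equ: Closed-Loop-System-Non-Compact}, and $\Fx^*$ is unique by uniqueness of the closed-loop mild solution.

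The step I expect to be the main obstacle is justifying the differentiation of $t\mapsto\langle\Fx_t,\BP(t)\Fx_t\rangle$ along a merely \emph{mild} trajectory, since the formal product rule is not immediately licensed when $\Fx_t$ need not lie in $\mathcal D(\BA)$ and $\dot\BP$ is a priori only a strong derivative. The clean resolutions are either (i) to exploit the boundedness of $\BA$ here, under which $\mathcal D(\BA)=(L^2[0,1])^n$, mild solutions coincide with strong solutions, and $\BP(\cdot)$ is genuinely $C^1$ in operator norm, so the product rule is classical; or (ii) to argue by density --- prove the identity first for smooth controls and $\Fx_0\in\mathcal D(\BA)$, then pass to the limit using continuity of mild solutions and of $J$ in the data. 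Verifying that the fixed-point map is a contraction and that self-adjointness and non-negativity are preserved is routine; all of this is carried out, in the general strongly-continuous-semigroup setting, in \cite[Part IV]{bensoussan2007representation}, which is what the cited proposition invokes.
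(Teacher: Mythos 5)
Your argument is correct and is exactly the contraction-mapping-plus-completion-of-squares proof that the paper does not reproduce but delegates wholesale to \cite[Part IV]{bensoussan2007representation}; your observation that boundedness of $\BA$ makes mild, strong and classical solutions coincide is precisely the point of the paper's remark immediately following the proposition. The one loose step is your claim that the mild-form fixed-point map propagates $\BP(t)\geq 0$ because $S_\BA(\cdot)^\TRANS(\,\cdot\,)S_\BA(\cdot)$ preserves positivity --- the integrand also contains the non-positive term $-\BP\BB\BB^\TRANS\BP$, so non-negativity must instead be read off from the closed-loop integral representation of $\BP$ or, as your own identity $J(\Fu)=\langle\Fx_0,\BP(0)\Fx_0\rangle+\int_0^T\|\Fu_t+\BB^\TRANS\BP(t)\Fx_t\|^2\,dt$ together with $J\geq 0$ under (A1) already shows, a posteriori from the verification argument.
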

{In general for an infinite dimensional operator Riccati equation, there are different types of solutions (such as mild, weak, strict and classical solutions \cite{bensoussan2007representation}). In this current formulation it can be verified that these different types of solutions are equivalent, and hence we don't distinguish among them.}

\section{Invariant Subspace}

Consider a Hilbert space $\HS$ and let ${\LS(\HS)}$ denote the set of 
bounded linear operators from $\HS$ to $\HS$. 
\begin{definition}[{Invariant Subspace}]
	An \emph{invariant subspace} of a bounded linear operator $\mathbb{T}\in \LS(\HS)$  is defined as any subspace $\SBS \subset \HS$ such that 
\[
\forall \Fx\in \SBS, \quad \mathbb{T} \Fx \in \SBS.
\] 
Then the subspace $\SBS$ is said to be \emph{$\mathbb{T}$-invariant}. 
\end{definition}
 By definition any subspace $\SBS \subset \HS$ is $\BI$-invariant. 

Consider a self-adjoint compact linear operator $\FA = \FA^\TRANS \in \LS(\HS)$. An application of the spectral theorem \cite[Chapter 8, Theorem 7.3]{sauvigny2012partial} implies that $\FA$ has non-trivial invariant subspace. 
Any eigenspace of $\FA$ (i.e. the space spanned by some eigenfunctions) is an invariant subspace of $\FA$. 
Let the Hilbert space $\HS$ be decomposed by $\SBS_{iv}$ and its orthogonal complement $(\SBS_{iv})^\perp$  as follows 
	$\HS = \SBS_{iv} \oplus (\SBS_{iv})^\perp,$
where $\SBS_{iv}$ is an invariant subspace of $\FA$.
By the orthogonal decomposition theorem, for any $\Fx\in \HS$, there exists a unique decomposition $\Fx = \Fx^\Ff + \breve \Fx$ with $\Fx^\Ff \in \SBS_{iv}$ and  $\breve \Fx \in (\SBS_{iv})^\perp$.  
We note that for any $ \Fz \in (\SBS_{iv})^\perp$,  $\FA  \Fz \in (\SBS_{iv})^\perp$ holds, since 
for any $\Fu \in \SBS_{iv}$ the following hold:
\begin{equation}
  \langle \Fu, \FA \Fz \rangle = \langle \FA^\TRANS \Fu, \Fz \rangle= \langle\FA \Fu, \Fz\rangle = 0.
\end{equation}
This means that $(\SBS_{iv})^\perp$ is also an invariant subspace of $\FA$.
Therefore the following property holds: 
\begin{equation}
\begin{aligned}
	\langle \Fx, \FA \Fx\rangle &=  \big\langle (\Fx^\Ff+\breve \Fx), \FA (\Fx^\Ff+\breve \Fx)\big\rangle
	 = \big\langle \Fx^\Ff, \FA \Fx^\Ff\big\rangle+ \big\langle \breve \Fx, \FA \breve \Fx\big\rangle. 
\end{aligned}
\end{equation}
The above property holds trivially for the identity operator $\BI$.

Let $\mathcal{S} \subset L^2[0,1]$ be an invariant subspace of $\FA \in \ESC$ and consider the subspace of $(L^2[0,1])^n$ denoted by 
$$(\mathcal{S})^n \triangleq \underbrace{\mathcal{S}\times\ldots \times \mathcal{S}}_{n} \subset (L^2[0,1])^n.$$ 
Clearly, by definition, $(\SBS \oplus \SBS^\perp)^n = (L^2[0,1])^n$. Any $\Fv \in (L^2[0,1])^n$ can be uniquely decomposed {through} its components as
\begin{equation}
    \Fv_i = \Fv^\Ff_i + \breve \Fv_{i}, \quad \forall i \in \{1,..., n\}
\end{equation}
where $\Fv^\Ff_i \in \SBS \subset L^2[0,1]$ and $\breve \Fv_{i} \in \SBS^\perp  \subset L^2[0,1]$. We call this \emph{component-wise decomposition} of $\Fv$ into $(\SBS)^n$ and  $(\SBS^\perp)^n$ and denote it by  $\Fv = \Fv^\Ff + \breve \Fv$ where $\Fv^\Ff \in (\SBS)^n$ and $\breve \Fv \in (\SBS^\perp)^n$. 

\begin{proposition}\label{prop:inv-sub-seperation}
Let $\mathcal{S} \subset L^2[0,1]$ be an invariant subspace of $\FA \in \ESC$.
Then both $(\mathcal{S})^n$ and $(\mathcal{S^\perp})^n$ are $[L_{\textup{a}} \BI+D_{\textup{a}} \FA]$-invariant, that is, 
\begin{equation}\label{eq:inv-sub-operatorLD}
\begin{aligned}
	&\forall \Fv \in (\mathcal{S})^n, \quad~~ [L_{\textup{a}} \BI+D_{\textup{a}} \FA] \Fv  \in  (\mathcal{S})^n;\\
	&\forall \Fv \in (\mathcal{S^\perp})^n, \quad [L_{\textup{a}} \BI+D_{\textup{a}} \FA] \Fv  \in  (\mathcal{S^\perp})^n.
\end{aligned}
\end{equation}
\end{proposition}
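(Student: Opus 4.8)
The plan is to verify the two inclusions in \eqref{eq:inv-sub-operatorLD} directly from the definitions, reducing everything to the one-dimensional invariance $\FA \SBS \subseteq \SBS$ (and, for the orthogonal complement, $\FA \SBS^\perp \subseteq \SBS^\perp$, which was already established in the text immediately preceding the proposition via the self-adjointness argument). First I would fix $\Fv \in (\SBS)^n$, so that each component $\Fv_i \in \SBS$. Applying the definition of the operator $[L_{\textup{a}}\BI + D_{\textup{a}}\FA]$ on $(L^2[0,1])^n$ from \eqref{eq:operation-vec} and \eqref{eq:identity-operation-vec}, I get, for a.e.\ $\alpha \in [0,1]$,
\begin{equation*}
\big([L_{\textup{a}}\BI + D_{\textup{a}}\FA]\Fv\big)(\alpha) = L_{\textup{a}}\Fv(\alpha) + D_{\textup{a}}\int_0^1 \FA(\alpha,\beta)\Fv(\beta)\,d\beta.
\end{equation*}

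The key observation is that each scalar entry of the output vector-valued function is an $\BR$-linear combination of the functions $\Fw_j \DEFINED \FA\Fv_j \in L^2[0,1]$ and $\Fv_j \in L^2[0,1]$, $j \in \{1,\dots,n\}$: writing $L_{\textup{a}} = [\ell_{ij}]$ and $D_{\textup{a}} = [d_{ij}]$, the $i$th component of the output is $\sum_{j=1}^n \ell_{ij}\Fv_j + \sum_{j=1}^n d_{ij}\Fw_j$. Since $\SBS$ is $\FA$-invariant we have $\Fw_j = \FA\Fv_j \in \SBS$, and $\Fv_j \in \SBS$ by hypothesis; as $\SBS$ is a subspace it is closed under finite $\BR$-linear combinations, so each component of $[L_{\textup{a}}\BI + D_{\textup{a}}\FA]\Fv$ lies in $\SBS$. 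Hence $[L_{\textup{a}}\BI + D_{\textup{a}}\FA]\Fv \in (\SBS)^n$, which is the first inclusion. The second inclusion is obtained by the identical argument with $\SBS$ replaced by $\SBS^\perp$, using that $\SBS^\perp$ is also $\FA$-invariant (established just above the proposition) and is likewise a subspace of $L^2[0,1]$.

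This argument is essentially routine; there is no genuine obstacle, only bookkeeping. The one point that deserves a word of care is making the passage from componentwise invariance to invariance of the product space precise: the operator $[L_{\textup{a}}\BI + D_{\textup{a}}\FA]$ mixes the $n$ scalar components through the matrices $L_{\textup{a}}$ and $D_{\textup{a}}$, so one must check that each \emph{output} component stays in $\SBS$ even though it depends on all \emph{input} components — this is exactly where one uses that $(\SBS)^n$ is the $n$-fold product (all components in the \emph{same} subspace $\SBS$), together with $\SBS$ being a linear subspace. I would also remark that the decomposition $(\SBS \oplus \SBS^\perp)^n = (\SBS)^n \oplus (\SBS^\perp)^n = (L^2[0,1])^n$ combined with the two inclusions shows that the componentwise decomposition $\Fv = \Fv^\Ff + \breve\Fv$ introduced earlier is preserved by $[L_{\textup{a}}\BI + D_{\textup{a}}\FA]$, i.e.\ the operator is block-diagonal with respect to this splitting; this is the form in which the proposition will be used in the subsequent decomposition of the LQR problem.
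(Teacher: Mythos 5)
Your proof is correct and takes essentially the same route as the paper, which simply states that \eqref{eq:inv-sub-operatorLD} follows by explicitly carrying out the calculations; you have supplied exactly that calculation, correctly reducing each output component to a finite linear combination of the $\Fv_j$ and $\FA\Fv_j$ and invoking the $\FA$-invariance of $\SBS$ (and of $\SBS^\perp$, as established just before the proposition). Nothing is missing.
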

Furthermore, for any $\Fv \in (L^2[0,1])^n$, the following decomposition holds  
\begin{equation}\label{eq:ort-dec-operatorLD}
\begin{aligned}
	\langle \Fv,&[L_{\textup{a}} \BI+D_{\textup{a}} \FA] \Fv\rangle 	 \\
	&= \big\langle \Fv^\Ff, [L_{\textup{a}} \BI+D_{\textup{a}} \FA] \Fv^\Ff\big\rangle+ \big\langle \breve \Fv,[L_{\textup{a}} \BI+D_{\textup{a}} \FA] \breve \Fv\big\rangle,
\end{aligned}
\end{equation}
where $\Fv = \Fv^\Ff + \breve \Fv$, $\Fv^\Ff \in (\mathcal{S})^n$ and $\breve \Fv \in  (\SBS^\perp)^n$.
\begin{proof}
\eqref{eq:inv-sub-operatorLD} is obtained by explicitly carrying out the calculations.  
\eqref{eq:inv-sub-operatorLD} together with the fact that $(\mathcal{S})^n$ and $(\mathcal{S^\perp})^n$ are also orthogonal to each other implies \eqref{eq:ort-dec-operatorLD}.
\end{proof}

\section{Solution via Subspace Decomposition}
\subsection{Dynamics and cost decomposition}
Consider the following assumptions:
\begin{description}
	\item [(A2)] $\FA \in \ESC $ and $\FB \in \ESC $ share the same invariant subspace  $\SBS \subset  L^2([0,1])$.
\end{description}

\begin{description}
	\item [(A3)] $\BQ = L_{\textup{q}}\BI + D_{\textup{q}}\FQ $ and $\BQ_T = L_{\textup{q}_\textup{T}} \BI + D_{\textup{q}_\textup{T}}\FQ_T$, where $L_{\textup{q}},L_{\textup{q}_\textup{T}},D_{\textup{q}},D_{\textup{q}_\textup{T}} \in \BR^{n\times n}$ and $\FQ, \FQ_{_T} \in \ESC$; 
$~\FQ$ and $\FQ_T$ share the invariant subspaces $\SBS \subset  L^2([0,1])$.
\end{description}

\begin{description}
	\item [(A4)] The invariant subspace $\SBS$ in (A2) and (A3) is the same.
\end{description}

Denote the component-wise decomposition of $\Fx_t \in (L^2[0,1])^n$ as $\Fx_t=\breve \Fx_t + \Fx^\Ff_t$ where $\breve \Fx_t \in (\mathcal{S}^\perp)^n$ and $\Fx_t^\Ff \in (\mathcal{S})^n$. Similarly, define $\Fu_t^\Ff$ and $\breve \Fu_t$.

\begin{lemma}\label{lem:dynamics-decouple}
Under the assumption \textup{(A2)}, the dynamics in \eqref{eq:dynamics} can be decoupled as follows: 
\begin{align}
		\dot \Fx_t^\Ff = & [L_{\textup{a}} \BI +D_{\textup{a}}\FA] \Fx_t^\Ff + [L_{\textup{b}} \BI + D_{\textup{b}}\FB]\Fu_t^\Ff, \label{eq:dynamic-decouple-1}
		\\
		\dot {\breve \Fx}_t = & [L_{\textup{a}} \BI + D_{\textup{a}}\FA] \breve \Fx_t+ [L_{\textup{b}} \BI +D_{\textup{b}}\FB]\breve \Fu_t. \label{eq:dynamic-decouple-2}
\end{align}
\end{lemma}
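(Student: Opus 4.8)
The plan is to reduce the claim to Proposition~\ref{prop:inv-sub-seperation} by applying the two complementary orthogonal projections associated with the decomposition $(L^2[0,1])^n = (\mathcal{S})^n \oplus (\mathcal{S}^\perp)^n$ to the differential equation \eqref{eq:dynamics}. Let $\Pi$ denote the orthogonal projection onto $(\mathcal{S})^n$, so that $\Pi\Fv = \Fv^\Ff$ and $(\I-\Pi)\Fv = \breve\Fv$ for every $\Fv \in (L^2[0,1])^n$; this is a bounded linear operator. First I would record that, since $\Pi$ is bounded and linear, it commutes with time differentiation: if $t\mapsto \Fx_t$ is (strongly) differentiable, then so is $t\mapsto \Fx_t^\Ff = \Pi\Fx_t$, with $\dot\Fx_t^\Ff = \Pi\dot\Fx_t$, and likewise $\dot{\breve\Fx}_t = (\I-\Pi)\dot\Fx_t$; one may equally carry the argument out at the level of the mild-solution/variation-of-constants formula.

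Next I would invoke invariance. Under \textup{(A2)}, $\mathcal{S}$ is an invariant subspace of both $\FA$ and $\FB$, so Proposition~\ref{prop:inv-sub-seperation} applies to $\BA = [L_{\textup{a}}\BI + D_{\textup{a}}\FA]$ and, verbatim, to $\BB = [L_{\textup{b}}\BI + D_{\textup{b}}\FB]$: both $(\mathcal{S})^n$ and $(\mathcal{S}^\perp)^n$ are $\BA$-invariant and $\BB$-invariant. Hence $\BA\Fx_t^\Ff,\, \BB\Fu_t^\Ff \in (\mathcal{S})^n$ and $\BA\breve\Fx_t,\, \BB\breve\Fu_t \in (\mathcal{S}^\perp)^n$, so that $\Pi\BA\Fx_t = \Pi\BA(\Fx_t^\Ff + \breve\Fx_t) = \BA\Fx_t^\Ff$, $\Pi\BB\Fu_t = \BB\Fu_t^\Ff$, and symmetrically $(\I-\Pi)\BA\Fx_t = \BA\breve\Fx_t$, $(\I-\Pi)\BB\Fu_t = \BB\breve\Fu_t$.

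Finally, applying $\Pi$ to $\dot\Fx_t = \BA\Fx_t + \BB\Fu_t$ yields $\dot\Fx_t^\Ff = \BA\Fx_t^\Ff + \BB\Fu_t^\Ff$, which is \eqref{eq:dynamic-decouple-1}, and applying $\I-\Pi$ yields $\dot{\breve\Fx}_t = \BA\breve\Fx_t + \BB\breve\Fu_t$, which is \eqref{eq:dynamic-decouple-2}. Equivalently, one observes that $\BA\Fx_t^\Ff + \BB\Fu_t^\Ff \in (\mathcal{S})^n$ and $\BA\breve\Fx_t + \BB\breve\Fu_t \in (\mathcal{S}^\perp)^n$ exhibit the right-hand side of \eqref{eq:dynamics} in its unique $(\mathcal{S})^n \oplus (\mathcal{S}^\perp)^n$ decomposition, which must then match, term by term, the corresponding decomposition $\dot\Fx_t = \dot\Fx_t^\Ff + \dot{\breve\Fx}_t$ of the left-hand side, by uniqueness of the orthogonal decomposition.

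I expect the only genuine subtlety to be the interchange of differentiation and projection, and, relatedly, the fact that $\Fx_t^\Ff$ and $\breve\Fx_t$ are themselves differentiable in the appropriate sense rather than merely continuous; this is handled by boundedness of $\Pi$ but is worth stating explicitly. Everything else is a direct consequence of Proposition~\ref{prop:inv-sub-seperation} together with the orthogonality $(\mathcal{S})^n \perp (\mathcal{S}^\perp)^n$. A minor point to flag is that the argument uses $\mathcal{S}$ closed, so that the orthogonal projection $\Pi$ is well defined, which is implicit in the notion of invariant subspace employed here.
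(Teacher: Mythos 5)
Your proposal is correct and follows essentially the same route as the paper: both arguments invoke Proposition~3 to get $\BA$- and $\BB$-invariance of $(\mathcal{S})^n$ and $(\mathcal{S}^\perp)^n$ and then split the dynamics via the unique orthogonal decomposition of $\Fx_t$ and $\Fu_t$. You merely make explicit the projection operator, its commutation with the time derivative, and the closedness of $\mathcal{S}$, details the paper's proof leaves implicit.
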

\begin{proof}
If (A2) holds, then both $\SBS$ and $\SBS^\perp$ are the common invariant subspaces of $\FA$ and $\FB$. 
{Following Proposition \ref{prop:inv-sub-seperation}}, this implies that $(\mathcal{S})^n$ and $({\SBS}^\perp)^n$ are both the common invariant subspaces of $[L_{\textup{a}} \BI +D_{\textup{a}}\FA]$ and $[L_{\textup{b}} \BI + D_{\textup{b}}\FB]$. Furthermore, $(\mathcal{S})^n$ and  $({\SBS}^\perp)^n$ are orthogonal to each other, and  the state $\Fx_t \in  (L^2[0,1])^n$ and the control $\Fu_t \in  (L^2[0,1])^n$ both admit unique component-wise decompositions into $(\mathcal{S})^n$ and  $({\SBS}^\perp)^n$.  These lead to the desired decomposition of the dynamics.
\end{proof}

\begin{lemma}\label{lem:cost-decouple}
Under the assumption \textup{(A3)}, 
the cost in \eqref{eq:cost} can be decoupled as follows:
\begin{equation}\label{eq:cost-sep}
		J(\Fu) = J_\mathcal{S}(\Fu^\Ff) + J_\mathcal{S^\perp}(\breve \Fu), 
	\end{equation}	
	where 
	\begin{equation}
		J_\mathcal{S}(\Fu^\Ff) = \int_0^T\left(\langle\Fx_t^\Ff, \BQ \Fx_t^\Ff \rangle+ \langle\Fu_t^\Ff,  \Fu_t^\Ff \rangle \right) dt + \langle \Fx_T^\Ff, \BQ_T \Fx_T^\Ff\rangle \label{eq:cost-decouple-1}
    \end{equation}
    \begin{equation}
		J_\mathcal{S^\perp}(\breve \Fu)=\int_0^T\left(\langle \breve  \Fx_t, \BQ \breve  \Fx_t \rangle+ \langle \breve \Fu_t,  \breve  \Fu_t \rangle \right) dt + \langle \breve  \Fx_T, \BQ_T \breve  \Fx_T\rangle. \label{eq:cost-decouple-2}
	\end{equation}~
\end{lemma}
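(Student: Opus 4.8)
The plan is to reduce \eqref{eq:cost-sep} to a pointwise-in-time (and, for the last term, pointwise-at-$T$) algebraic splitting of the cost integrand, and then integrate over $[0,T]$ and regroup. Concretely, for each fixed $t$ I would show that $\langle\Fx_t,\BQ\Fx_t\rangle+\langle\Fu_t,\Fu_t\rangle$ decomposes into a part depending only on the $(\SBS)^n$-components $(\Fx_t^\Ff,\Fu_t^\Ff)$ and a part depending only on the $(\SBS^\perp)^n$-components $(\breve\Fx_t,\breve\Fu_t)$, with no cross terms, and likewise that $\langle\Fx_T,\BQ_T\Fx_T\rangle=\langle\Fx_T^\Ff,\BQ_T\Fx_T^\Ff\rangle+\langle\breve\Fx_T,\BQ_T\breve\Fx_T\rangle$; linearity of the integral then gives \eqref{eq:cost-sep} with $J_\SBS$ and $J_{\SBS^\perp}$ as in \eqref{eq:cost-decouple-1}--\eqref{eq:cost-decouple-2}.

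There are three sub-identities to establish. First, the control-energy term: since $(\SBS)^n$ and $(\SBS^\perp)^n$ are mutually orthogonal subspaces of $(L^2[0,1])^n$ and $\Fu_t=\Fu_t^\Ff+\breve\Fu_t$ is precisely the component-wise decomposition into them, the cross terms vanish and $\langle\Fu_t,\Fu_t\rangle=\langle\Fu_t^\Ff,\Fu_t^\Ff\rangle+\langle\breve\Fu_t,\breve\Fu_t\rangle$. Second, the running state cost: under (A3), $\BQ=L_{\textup{q}}\BI+D_{\textup{q}}\FQ$ with $\FQ\in\ESC$ admitting $\SBS$ as an invariant subspace, so $\BQ$ falls into exactly the operator class of Proposition \ref{prop:inv-sub-seperation} (take $\FQ,L_{\textup{q}},D_{\textup{q}}$ in place of $\FA,L_{\textup{a}},D_{\textup{a}}$); the ``furthermore'' part of that proposition, i.e. \eqref{eq:ort-dec-operatorLD}, yields $\langle\Fx_t,\BQ\Fx_t\rangle=\langle\Fx_t^\Ff,\BQ\Fx_t^\Ff\rangle+\langle\breve\Fx_t,\BQ\breve\Fx_t\rangle$. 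Third, the terminal cost: the same argument applied to $\BQ_T=L_{\textup{q}_\textup{T}}\BI+D_{\textup{q}_\textup{T}}\FQ_T$ and to the decomposition $\Fx_T=\Fx_T^\Ff+\breve\Fx_T$ gives the corresponding splitting of $\langle\Fx_T,\BQ_T\Fx_T\rangle$. Substituting the three identities into \eqref{eq:cost}, the $\Ff$-superscripted terms assemble into $J_\SBS(\Fu^\Ff)$ and the $\breve{\ }$-terms into $J_{\SBS^\perp}(\breve\Fu)$.

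I do not expect a genuine obstacle: the whole content is orthogonality of $(\SBS)^n$ and $(\SBS^\perp)^n$ together with the invariant-subspace structure of $\BQ,\BQ_T$ furnished by Proposition \ref{prop:inv-sub-seperation}. The one point deserving a remark is interpretational rather than computational: presenting $J_\SBS$ as a functional of $\Fu^\Ff$ alone (and $J_{\SBS^\perp}$ of $\breve\Fu$ alone) is fully justified only once the state components $\Fx_t^\Ff$ and $\breve\Fx_t$ are known to be driven separately by $\Fu_t^\Ff$ and $\breve\Fu_t$, which is the dynamics decoupling of Lemma \ref{lem:dynamics-decouple} and uses (A2); as a plain identity among real numbers, however, \eqref{eq:cost-sep} holds under (A3) alone, with $\Fx_t^\Ff,\breve\Fx_t$ read simply as the component-wise projections of the full state trajectory.
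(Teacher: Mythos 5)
Your argument is correct and coincides with the paper's own proof: both apply Proposition~\ref{prop:inv-sub-seperation} (identity \eqref{eq:ort-dec-operatorLD}, with $\BQ$ and $\BQ_T$ in place of $[L_{\textup{a}}\BI+D_{\textup{a}}\FA]$, and the trivial identity-operator case for the control term) to split each integrand pointwise in $t$ and then integrate. Your closing remark that \eqref{eq:cost-sep} holds as a pure algebraic identity under (A3) alone, while reading $J_\mathcal{S}$ as a functional of $\Fu^\Ff$ only becomes meaningful after the dynamics decoupling of Lemma~\ref{lem:dynamics-decouple}, is a fair clarification but does not alter the proof.
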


\begin{proof}
	Under the assumption (A3),  {an application of Proposition \ref{prop:inv-sub-seperation} yields} 
\[
\begin{aligned}
	 &\langle\Fx_t, \BQ \Fx_t \rangle= \langle\Fx_t^\Ff, \BQ \Fx_t ^\Ff\rangle + \langle\breve \Fx_t, \BQ \breve \Fx_t\rangle, \\
	 &\langle\Fu_t, \Fu_t \rangle = \langle\Fu_t^\Ff,  \Fu_t ^\Ff\rangle + \langle\breve \Fu_t, \breve \Fu_t\rangle,\\
	 &\langle\Fx_t, \BQ_T \Fx_t \rangle = \langle\Fx_t^\Ff, \BQ_T \Fx_t ^\Ff\rangle + \langle\breve \Fx_t, \BQ_T \breve \Fx_t\rangle. \\
\end{aligned}
\]
The above separations hold for any $t\in[0,T]$ and hence we obtain the cost decomposition in \eqref{eq:cost-sep}.
\end{proof}

Under assumptions (A1)-(A4), the original problem can be decoupled as separate LQR problems in orthogonal subspaces, that is,  the LQR problem defined by \eqref{eq:dynamic-decouple-1} and \eqref{eq:cost-decouple-1}, and the LQR problem given by \eqref{eq:dynamic-decouple-2} and \eqref{eq:cost-decouple-2}. These problems can be solved independently {and each of them has a unique solution}. 
\subsection{Low-complexity solutions}
In certain situations, the above decoupling leads to simplifications. 
\begin{description}
	\item [(A5)] (i) The common invariant subspace $\SBS$ in (A4) of the underlying coupling operators $\FA$, $\FB$, $\FQ$,  and $\FQ_T$ is finite-dimensional; \\
	(ii) Furthermore, the underlying coupling operators $\FA$, $\FB$, $\FQ$,   and $\FQ_T$ admit exact low-rank representations in $\SBS$, that is, for any $\breve \Fv \in \SBS^\perp$, $\FA \breve\Fv =0$, $\FB \breve\Fv =0$, $\FQ \breve\Fv =0$ and $\FQ_T\breve\Fv =0$.
\end{description}
{The smallest subspace $\mathcal{S}$ that satisfies (A5) is defined as  the \emph{nontrivial common invariant subspace} of operators $\FA, \FB, \FQ$ and $\FQ_T$.}

{Assumption (A5) is satisfied in many cases. For instance, in control or game problems with mean field couplings, the common invariant subspace is naturally $\mathcal{S}=\textup{span}\{\mathbf{1}\}$ where $\mathbf{1}\in L^2[0,1]$ with $\mathbf{1}(\alpha) = 1$ for all $\alpha \in[0,1]$, and hence (A5) is satisfied. A second example would be that where the underlying couplings may be multi-hop neighbourhood couplings on a single network as illustrated in  \cite{ShuangAdityaCDC19,ShuangPeterCDC19W2} and the underlying common invariant subspace is just the eigenspace of the coupling matrix. As another example,  coupling matrices (or similarity matrices) in recommender systems \cite{aggarwal2016recommender} may be built upon certain low-dimensional feature space (or latent factors), which may naturally give rise to a common invariant subspace.} 

{
For conditions and algorithms to identify common invariant subspaces of matrices and linear operators readers are referred to \cite{arapura2004common,drnovvsek2001common,jamiolkowski2015generalized} and the references therein. 
}

 The result below follows Lemma \ref{lem:dynamics-decouple}.
\begin{corollary}
	Under assumptions \textup{(A2)} and \textup{(A5)}, the dynamics in \eqref{eq:dynamics} can be decoupled as follows: 
\begin{align}
		\dot \Fx_t^\Ff = & [L_{\textup{a}} \BI +D_{\textup{a}}\FA] \Fx_t^\Ff + [L_{\textup{b}} \BI +D_{\textup{b}}\FB]\Fu_t^\Ff, \label{eq:decouple-dyn-lowdim}
		\\
		\dot {\breve \Fx}_t = & [L_{\textup{a}} \BI]  \breve \Fx_t+ [L_{\textup{b}} \BI]  \breve \Fu_t. \label{eq:aux-dya-lowdim}
\end{align}
\end{corollary}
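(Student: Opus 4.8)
The plan is to obtain the Corollary as a direct specialization of Lemma~\ref{lem:dynamics-decouple}. Since (A5) in particular names $\SBS$ as the common invariant subspace of $\FA,\FB,\FQ,\FQ_T$, Assumption (A4) holds and hence (A2) is in force, so Lemma~\ref{lem:dynamics-decouple} applies verbatim: the state and control admit unique component-wise decompositions $\Fx_t = \Fx_t^\Ff + \breve\Fx_t$ and $\Fu_t = \Fu_t^\Ff + \breve\Fu_t$ with $\Fx_t^\Ff,\Fu_t^\Ff \in (\SBS)^n$ and $\breve\Fx_t,\breve\Fu_t \in (\SBS^\perp)^n$, and the dynamics split into \eqref{eq:dynamic-decouple-1}--\eqref{eq:dynamic-decouple-2}. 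Equation \eqref{eq:decouple-dyn-lowdim} is then literally \eqref{eq:dynamic-decouple-1}, so nothing further is needed there.

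The only substantive step is to show that the $D_{\textup{a}}\FA$ and $D_{\textup{b}}\FB$ terms disappear from \eqref{eq:dynamic-decouple-2}. For this I would unfold the definition \eqref{eq:operation-vec}: for any $\breve\Fv \in (\SBS^\perp)^n$ the $i$th coordinate of $\int_0^1 \FA(\alpha,\beta)\breve\Fv(\beta)\,d\beta$ is exactly $(\FA\breve\Fv_i)(\alpha)$, and since $\breve\Fv_i \in \SBS^\perp$ the low-rank clause (A5)(ii) gives $\FA\breve\Fv_i = 0$. Hence $[D_{\textup{a}}\FA]\breve\Fv = 0$, and likewise $[D_{\textup{b}}\FB]\breve\Fv = 0$. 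Applying this with $\breve\Fv = \breve\Fx_t$ and $\breve\Fv = \breve\Fu_t$ collapses \eqref{eq:dynamic-decouple-2} to $\dot{\breve\Fx}_t = [L_{\textup{a}}\BI]\breve\Fx_t + [L_{\textup{b}}\BI]\breve\Fu_t$, which is \eqref{eq:aux-dya-lowdim}.

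I do not expect a genuine obstacle here; the Corollary is a bookkeeping consequence of Lemma~\ref{lem:dynamics-decouple} together with the annihilation property in (A5)(ii). The one point worth stating explicitly is the (elementary) fact that the component-wise action of $D\FA$ as defined in \eqref{eq:operation-vec} amounts to applying the scalar graphon operator $\FA$ coordinate by coordinate, so that membership of each coordinate in $\SBS^\perp$ suffices to kill the term. It is also worth noting that finite-dimensionality of $\SBS$ plays no role in this reduction itself; it matters later, in ensuring that the $\Fx^\Ff$-subsystem is governed by a finite-dimensional Riccati equation, not in the decoupling.
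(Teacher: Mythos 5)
Your proposal is correct and matches the paper's (implicit) argument: the paper derives this corollary directly from Lemma~\ref{lem:dynamics-decouple}, with the $D_{\textup{a}}\FA$ and $D_{\textup{b}}\FB$ terms in \eqref{eq:dynamic-decouple-2} vanishing precisely because (A5)(ii) annihilates $\FA$ and $\FB$ on $\SBS^\perp$ coordinate-wise. Your added remark that finite-dimensionality of $\SBS$ is not needed for this step is accurate.
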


An application of Lemma \ref{lem:cost-decouple} yields the following result. 
\begin{corollary}
Under the assumptions \textup{(A3)} and \textup{(A5)}, the cost in \eqref{eq:cost} can be decoupled as follows:
\begin{equation}
		J(\Fu) = J_\mathcal{S}(\Fu^\Ff) + J_\mathcal{S^\perp}(\breve \Fu), 
	\end{equation}	
	where 
	\begin{align}
		J_\mathcal{S}(\Fu^\Ff) &= \int_0^T\left(\langle\Fx_t^\Ff, \BQ \Fx_t^\Ff \rangle+ \langle\Fu_t^\Ff,  \Fu_t^\Ff \rangle \right) dt + \langle \Fx_T^\Ff, \BQ_T \Fx_T^\Ff\rangle \label{eq:decouple-cost-lowdim}, \\
		J_\mathcal{S^\perp}(\breve \Fu)&=\int_0^T\left(\langle \breve  \Fx_t, [L_{\textup{q}}\BI] \breve  \Fx_t \rangle+ \langle \breve \Fu_t, \breve  \Fu_t \rangle \right) dt + \langle \breve  \Fx_T, [L_{\textup{q}_\textup{T}}\BI] \breve  \Fx_T\rangle. \label{eq:aux-cost-lowdim}
	\end{align}
\end{corollary}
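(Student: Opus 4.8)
The plan is to build on Lemma~\ref{lem:cost-decouple} and then collapse the $\mathcal{S}^\perp$-part using the exact low-rank hypothesis \textup{(A5)(ii)}. First I would invoke Lemma~\ref{lem:cost-decouple}: since \textup{(A3)} holds, the cost splits as $J(\Fu) = J_\mathcal{S}(\Fu^\Ff) + J_{\mathcal{S}^\perp}(\breve\Fu)$ with the two pieces given by \eqref{eq:cost-decouple-1} and \eqref{eq:cost-decouple-2}. The $\mathcal{S}$-piece is already in the claimed form \eqref{eq:decouple-cost-lowdim}, so all that remains is to rewrite \eqref{eq:cost-decouple-2} as \eqref{eq:aux-cost-lowdim}.

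For the $\mathcal{S}^\perp$-piece the key observation is that the state-weighting operator acts trivially off $\SBS$. Writing $\BQ = L_\textup{q}\BI + D_\textup{q}\FQ$ as in \textup{(A3)} and using the component-wise description in \eqref{eq:operation-vec}, for any $\breve\Fv \in (\SBS^\perp)^n$ the $i$-th component of $[D_\textup{q}\FQ]\breve\Fv$ equals $D_\textup{q}$ applied to the vector whose $i$-th entry is $(\FQ\breve\Fv_i)(\cdot)$, where $\FQ$ denotes the graphon operator on $L^2[0,1]$ and $\breve\Fv_i \in \SBS^\perp$. By \textup{(A5)(ii)}, $\FQ\breve\Fv_i = 0$ for each $i$, hence $[D_\textup{q}\FQ]\breve\Fv = 0$ and therefore $\BQ\breve\Fv = [L_\textup{q}\BI]\breve\Fv$ for every $\breve\Fv \in (\SBS^\perp)^n$. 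The same argument applied to $\FQ_T$ gives $\BQ_T\breve\Fv = [L_{\textup{q}_\textup{T}}\BI]\breve\Fv$ on $(\SBS^\perp)^n$. Since $\breve\Fx_t$ is by construction the $(\SBS^\perp)^n$-component of $\Fx_t$, it lies in $(\SBS^\perp)^n$ for every $t\in[0,T]$, so these identities may be used pointwise in $t$ (and at $t=T$); substituting $\BQ\breve\Fx_t = [L_\textup{q}\BI]\breve\Fx_t$ and $\BQ_T\breve\Fx_T = [L_{\textup{q}_\textup{T}}\BI]\breve\Fx_T$ into \eqref{eq:cost-decouple-2} turns it into \eqref{eq:aux-cost-lowdim}. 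One may additionally note that, by the preceding corollary, $\breve\Fx_t$ solves the reduced dynamics \eqref{eq:aux-dya-lowdim}, so that $J_{\mathcal{S}^\perp}$ is genuinely the cost of the decoupled infinite-dimensional subproblem.

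I do not expect a serious obstacle here; the only point requiring care is the bookkeeping with the coordinate-wise action of the graphon operators in \eqref{eq:operation-vec}, namely that $D_\textup{q}\FQ$ first applies the scalar graphon operator $\FQ$ to each coordinate and only afterwards mixes the coordinates through the matrix $D_\textup{q}$. This makes the lift of the operator-annihilation statement \textup{(A5)(ii)} from $L^2[0,1]$ to $(L^2[0,1])^n$ immediate, and the rest is substitution into the already-established decomposition of Lemma~\ref{lem:cost-decouple}.
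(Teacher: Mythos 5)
Your proposal is correct and follows the same route as the paper: the paper simply states that the corollary follows from an application of Lemma~\ref{lem:cost-decouple}, with the simplification of the $\mathcal{S}^\perp$-terms being exactly the observation you spell out, namely that \textup{(A5)(ii)} forces $[D_{\textup{q}}\FQ]\breve\Fv = 0$ and $[D_{\textup{q}_\textup{T}}\FQ_T]\breve\Fv = 0$ on $(\SBS^\perp)^n$, so that $\BQ$ and $\BQ_T$ reduce to $[L_{\textup{q}}\BI]$ and $[L_{\textup{q}_\textup{T}}\BI]$ there. Your version merely makes explicit the coordinate-wise bookkeeping that the paper leaves implicit.
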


\subsection{Projection into a low-dimensional subspace}
Consider an arbitrary orthonormal basis $\{\Ff_1\ldots\Ff_d\}$  for the low-dimensional subspace $\SBS_\Ff \subset L^2[0,1]$ of dimension $d$. Note that $\Ff_1\ldots\Ff_d$ are not necessarily the eigenfunctions of the operator $\FA$, $\FB$, $\FQ$, or $\FQ_T$.
For all $\ell, k\in \{1,\ldots,d\}$, let
$$
\begin{aligned}
A_{\ell k}&\triangleq \langle \FA \Ff_\ell, \Ff_k\rangle,~~ B_{\ell k}\triangleq \langle \FB \Ff_\ell,\Ff_k\rangle, \\
Q_{\ell k}&\triangleq \langle \FQ \Ff_\ell, \Ff_k\rangle, ~~Q_{T_{\ell k}}\triangleq \langle \FQ_T \Ff_\ell, \Ff_k\rangle.\\
\end{aligned}
$$
Denote  $\Ff \triangleq \{\Ff_1,\ldots, \Ff_d\}$.
Consider the following projections 
\begin{align}
	& \text{Proj}_\Ff(\cdot): (L^2[0,1])^n \rightarrow \BR^{nd}, \nonumber\\
	&	\text{Proj}_\Ff(\cdot): {\LS}((L^2[0,1])^n)\rightarrow \BR^{nd\times nd}\nonumber
\end{align} 
 into the subspace $(\SBS_\Ff)^n$ with $\SBS_\Ff= \textup{span}\{\Ff_1,\ldots,\Ff_d\}$. We use the same symbol $\text{Proj}_\Ff(\cdot)$ for the projections of functions and operators as it will be clear  which projection is used in the specific context.
 The projection operations are defined as follows: for  $\Fx_t \in (L^2[0,1])^n$ and any $ D \mathbb{T} \in \LS((L^2[0,1])^n)$ with $ \mathbb{T} \in \LS(L^2[0,1])$ and $D\in \BR^n$:
\begin{equation}
	\begin{aligned}
	&\text{Proj}_\Ff(D\mathbb{T} ) \triangleq  \MATRIX{\langle\Ff_1 ,\mathbb{T}\Ff_1 \rangle& \ldots &\langle\Ff_d ,\mathbb{T}\Ff_d \rangle\\
\vdots &\ddots  &\vdots\\
\langle\Ff_d ,\mathbb{T}\Ff_1 \rangle& \ldots & \langle\Ff_d ,\mathbb{T}\Ff_d \rangle} \otimes D \in \BR^{nd\times nd}, \\[0.3em]
&\text{Proj}_\Ff(\Fx_t) \triangleq x_{t}^{p} = [{x_{1t}^{p}}^\TRANS,\ldots, {x_{nt}^{p}}^\TRANS]^\TRANS \in \BR^{nd},
	\end{aligned}
\end{equation}
where $	x_{it}^{p}  \triangleq [x_{it}^{p1},\ldots, x_{it}^{pd}]^\TRANS, ~ x_{it}^{p\ell}  \triangleq \langle \Fx_{it},\Ff_\ell \rangle,$   $i \in \{1,...,n\},$ 
and $ \Fx_{it} \in L^2[0,1]$ represents the $i$th function component of $\Fx_{t} \in (L^2[0,1])^n$.
According to this definition, we obtain
\begin{equation}
    \begin{aligned}
        &\text{Proj}_\Ff(D\BI)  = I\otimes D ,\quad &\text{Proj}_\Ff(D\FA) = A\otimes D,  \\
        &\text{Proj}_\Ff(D\FB) = B\otimes D,\quad &\text{Proj}_\Ff(D\FQ) = Q\otimes D,\\
        &\text{Proj}_\Ff(D \FQ_T) = Q_T \otimes D,& \\
    \end{aligned}
\end{equation}
for any $D\in \BR^{n\times n}$.
\begin{lemma}
	If $\SBS_\Ff \triangleq \textup{span}\{\Ff_1,\ldots,\Ff_d\}$ forms an invariant subspace of $\FA\in \ESC$, then for any $D\in \BR^{n\times n}$ and $\Fx_t \in (L^2[0,1])^n$, the following relations  hold
\begin{equation*}
\begin{aligned}
	\textup{Proj}_\Ff(D\FA \Fx_t)&= \textup{Proj}_\Ff(D\FA )\textup{Proj}_\Ff(\Fx_t) = (A \otimes D) x_t^p \in \BR^{nd};
\end{aligned}
\end{equation*}	
{Moreover, if for any $\Fv\in \mathcal{S}_\Ff^\perp $, $\FA\Fv =0$, then}	
\begin{equation*}
\begin{aligned}
   \langle [D\FA] \Fx_t, \Fx_t \rangle  &=  \textup{Proj}_\Ff(\Fx_t)^\TRANS  \textup{Proj}_\Ff(D\FA )\textup{Proj}_\Ff(\Fx_t) 
    \\&= (x_t^p)^\TRANS (A\otimes D) x_t^p.
\end{aligned}
\end{equation*}
\end{lemma}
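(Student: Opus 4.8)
The plan is to verify both identities by passing to the component-wise decomposition $(L^2[0,1])^n = (\SBS_\Ff)^n\oplus(\SBS_\Ff^\perp)^n$ introduced before Proposition~\ref{prop:inv-sub-seperation}, doing elementary scalar computations there, and then re-assembling the resulting block expressions into Kronecker form. First I would fix $\Fx_t\in(L^2[0,1])^n$ and write each scalar component as $\Fx_{it}=\Fx_{it}^\Ff+\breve\Fx_{it}$ with $\Fx_{it}^\Ff=\sum_{\ell=1}^d x_{it}^{p\ell}\Ff_\ell\in\SBS_\Ff$ and $\breve\Fx_{it}\in\SBS_\Ff^\perp$; by construction $\textup{Proj}_\Ff(\Fx_t)=x_t^p$ merely collects the coordinates $x_{it}^{p\ell}$, and, since $\SBS_\Ff$ is $\FA$-invariant, $\FA\Ff_m=\sum_{k=1}^d\langle\FA\Ff_m,\Ff_k\rangle\Ff_k=\sum_k A_{mk}\Ff_k$.

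For the first identity I would compute the generic entry of $\textup{Proj}_\Ff(D\FA\Fx_t)$. By the definition \eqref{eq:operation-vec} of the operation $D\FA$, the $i$th function component of $[D\FA]\Fx_t$ is $\sum_{j=1}^n D_{ij}\,\FA\Fx_{jt}$, so its inner product with $\Ff_\ell$ is $\sum_j D_{ij}\langle\FA\Fx_{jt},\Ff_\ell\rangle$. Here the invariance of $\SBS_\Ff$ does the work: because $\FA$ is a self-adjoint graphon operator and $\FA\Ff_\ell\in\SBS_\Ff$, one has $\langle\FA\breve\Fx_{jt},\Ff_\ell\rangle=\langle\breve\Fx_{jt},\FA\Ff_\ell\rangle=0$, hence $\langle\FA\Fx_{jt},\Ff_\ell\rangle=\langle\FA\Fx_{jt}^\Ff,\Ff_\ell\rangle=\sum_m x_{jt}^{pm}\langle\FA\Ff_m,\Ff_\ell\rangle=\sum_m A_{m\ell}x_{jt}^{pm}$. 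Substituting back gives the entry $\sum_j D_{ij}\sum_m A_{m\ell}x_{jt}^{pm}$, which is precisely the corresponding entry of $(A\otimes D)x_t^p=\textup{Proj}_\Ff(D\FA)\,\textup{Proj}_\Ff(\Fx_t)$; it then remains only to match the Kronecker-product indexing.

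For the second identity I would additionally use the low-rank hypothesis that $\FA\breve\Fv=0$ for all $\breve\Fv\in\SBS_\Ff^\perp$. Decomposing $\Fx_t=\Fx_t^\Ff+\breve\Fx_t$ component-wise and applying \eqref{eq:operation-vec} component-wise, the $i$th component of $[D\FA]\breve\Fx_t$ is $\sum_j D_{ij}\FA\breve\Fx_{jt}=0$, so $[D\FA]\Fx_t=[D\FA]\Fx_t^\Ff$; moreover Proposition~\ref{prop:inv-sub-seperation} (taken with $L_{\textup{a}}=0$) shows $[D\FA]\Fx_t^\Ff\in(\SBS_\Ff)^n$, which is orthogonal to $\breve\Fx_t\in(\SBS_\Ff^\perp)^n$. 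Hence $\langle[D\FA]\Fx_t,\Fx_t\rangle=\langle[D\FA]\Fx_t^\Ff,\Fx_t^\Ff\rangle=\sum_{i,j}D_{ij}\langle\FA\Fx_{jt}^\Ff,\Fx_{it}^\Ff\rangle=\sum_{i,j}D_{ij}\sum_{m,\ell}A_{m\ell}x_{jt}^{pm}x_{it}^{p\ell}$, using $\FA\Ff_m=\sum_\ell A_{m\ell}\Ff_\ell$ once more. This double sum is exactly the quadratic form $(x_t^p)^\TRANS(A\otimes D)x_t^p=\textup{Proj}_\Ff(\Fx_t)^\TRANS\textup{Proj}_\Ff(D\FA)\textup{Proj}_\Ff(\Fx_t)$.

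The individual steps are routine; the one place calling for care is the bookkeeping of the double index (node component $i$, basis index $\ell$) as it threads through the projection definition, the mixing by the matrix $D$, and the basis expansion of $\FA\Ff_m$, so that the assembled matrix is the stated $A\otimes D$ rather than its transpose or a reordering. I expect that consistency check, together with keeping straight which inclusion ($\SBS_\Ff$-invariance alone versus the full low-rank condition) is needed for which claim, to be the only real obstacle.
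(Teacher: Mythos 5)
The paper states this lemma without any proof, so there is nothing to compare against line by line; judged on its own, your argument is correct and is evidently the computation the authors intend. Both key steps check out: for the first identity you correctly observe that invariance of $\SBS_\Ff$ alone is not enough --- you need $\langle \FA\breve\Fx_{jt},\Ff_\ell\rangle=\langle\breve\Fx_{jt},\FA\Ff_\ell\rangle=0$, which uses self-adjointness of $\FA$ (guaranteed here because elements of $\ES_c$ are symmetric by definition, and equivalent to the paper's earlier observation that $\SBS^\perp$ is also $\FA$-invariant); and for the second identity you correctly isolate where the stronger hypothesis $\FA\breve\Fv=0$ is used, namely to kill $[D\FA]\breve\Fx_t$ so that only the $(\SBS_\Ff)^n$ part contributes to the quadratic form. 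The one caveat you flag is real: the paper's definition stacks $x_t^p$ with the node index $i$ outermost, whereas the Kronecker form $A\otimes D$ (and the projected dynamics \eqref{eq:dynamics-xp} with $I\otimes L_{\textup{a}}$) presuppose the basis index $\ell$ outermost, so your entrywise expression $\sum_{j,m}D_{ij}A_{m\ell}x_{jt}^{pm}$ matches $(A\otimes D)x_t^p$ only after the corresponding coordinate permutation (and uses symmetry of $A$ to reconcile $A_{m\ell}$ with $A_{\ell m}$); this is an inconsistency in the paper's bookkeeping rather than a gap in your proof.
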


For any $v\in \BR^n$ and $\Fz \in L^2[0,1]$, let $v\Fz \in (L^2[0,1])^n$ be defined as follows: for any $\alpha \in [0,1], i \in \{1,\ldots,n\}$,
\begin{equation}
	(v\Fz)(\alpha) = v \Fz(\alpha),  \quad   (v\Fz)_i = v_i \Fz.
\end{equation}
Let the $i$th component of $\Fx_{t}^\ell \in (L^2[0,1])^n$ be defined by 
	$\Fx_{it}^\ell =  \langle \Fx_{it},\Ff_\ell \rangle \Ff_\ell = x_{it}^{p\ell} \Ff_\ell$.
\begin{proposition}\label{eq:decoupled-lqr-problems}
Under \textup{(A1)-(A5)}, the original problem defined by \eqref{eq:dynamics} and \eqref{eq:cost} can be transformed into the following equivalent problem
\begin{align}
		\dot x_t^{p} = & (I\otimes L_{\textup{a}} + A\otimes D_{\textup{a}}) x_t^{p} + (I\otimes L_{\textup{b}}  +B\otimes D_{\textup{b}})u_t^{p},  \label{eq:dynamics-xp}
		\\
		\dot {\breve \Fx}_t^\gamma = & L_{\textup{a}}  \breve \Fx_t^\gamma+ L_{\textup{b}} \breve \Fu_t^\gamma,  ~~\gamma \in  [0,1], \label{eq:dynamics-xp-aux}
\end{align}
 with the following cost to be minimized
	\begin{align}
			J_\mathcal{S}(u^{p}) &= \int_0^T\left(x_t^{p\TRANS} {(I\otimes L_{\textup{q}}+Q\otimes D_{\textup{q}} )} x_t^p + u_t^{p\TRANS} u_t^p \right) dt \nonumber\\
		& \qquad + x_T^{p\TRANS} (I\otimes L_{\textup{q}_\textup{T}} +{Q}_T\otimes D_{\textup{q}_\textup{T}}) x_T^p,  \label{eq:decouple-cost-xp}\\
		J_\mathcal{S^\perp}({\breve\Fu}^\gamma)&=\int_0^T\left(  \breve\Fx_t^{\gamma\TRANS} L_{\textup{q}} \breve\Fx_t^\gamma + {\breve \Fu}_t^{\gamma\TRANS} \breve \Fu_t^\gamma\right) dt
		 +  {\breve\Fx}_T^{\gamma\TRANS} L_{\textup{q}_\textup{T}} {\breve \Fx_T}^\gamma,\label{eq:cost-xp-aux}
	\end{align}
	 \text{for almost all} $\gamma \in [0,1]$ where  $\breve \Fx_t^\gamma, \breve\Fu_t^\gamma \in \BR^{n}, x_t^p,u_t^p \in\BR^{nd}$, and the initial conditions are given by $x_0^p = \textup{Proj}_\Ff(\Fx_0)={[{x_{1 0}^{p}}^\TRANS,\ldots, {x_{n0}^{p}}^\TRANS]}^\TRANS$ and $\breve \Fx_0^\gamma = \Fx_0^\gamma -\sum_{\ell =1}^d x_{\ell0}^p \Ff_\ell(\gamma).$ 
\end{proposition}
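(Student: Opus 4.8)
The plan is to assemble the three decoupling results established above --- the dynamics decomposition in the corollary to Lemma~\ref{lem:dynamics-decouple}, the cost decomposition in the corollary to Lemma~\ref{lem:cost-decouple}, and the finite-dimensional coordinatization of the $\SBS$-component supplied by the lemma immediately preceding this proposition --- and then to verify that the resulting split problem is equivalent to the original one. Throughout I would take $\SBS = \SBS_\Ff = \textup{span}\{\Ff_1,\ldots,\Ff_d\}$ to be the nontrivial common invariant subspace of (A5) with its orthonormal basis $\{\Ff_1,\ldots,\Ff_d\}$, and write the component-wise decompositions $\Fx_t = \Fx_t^\Ff + \breve\Fx_t$ and $\Fu_t = \Fu_t^\Ff + \breve\Fu_t$ into $(\SBS)^n$ and $(\SBS^\perp)^n$. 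By the corollaries to Lemmas~\ref{lem:dynamics-decouple} and \ref{lem:cost-decouple} the $\SBS$-part obeys \eqref{eq:decouple-dyn-lowdim} with cost \eqref{eq:decouple-cost-lowdim}, the $\SBS^\perp$-part obeys \eqref{eq:aux-dya-lowdim} with cost \eqref{eq:aux-cost-lowdim}, and $J(\Fu) = J_\SBS(\Fu^\Ff) + J_{\SBS^\perp}(\breve\Fu)$.

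For the $\SBS$-component the crucial observation is that, since $\SBS_\Ff$ is $d$-dimensional with an orthonormal basis, the restriction of $\textup{Proj}_\Ff$ to $(\SBS_\Ff)^n$ is a linear isometric isomorphism onto $\BR^{nd}$, whose inverse reconstitutes a function from its coordinates. Applying $\textup{Proj}_\Ff$ to \eqref{eq:decouple-dyn-lowdim}, using $\textup{Proj}_\Ff(D\BI) = I\otimes D$ together with the identities $\textup{Proj}_\Ff(D\FA\Fv) = (A\otimes D)\textup{Proj}_\Ff(\Fv)$ and $\textup{Proj}_\Ff(D\FB\Fv) = (B\otimes D)\textup{Proj}_\Ff(\Fv)$ for $\Fv\in(\SBS_\Ff)^n$ (the preceding lemma, whose hypotheses hold because $\SBS_\Ff$ is $\FA$- and $\FB$-invariant and, by (A5)(ii), $\FA$ and $\FB$ annihilate $\SBS_\Ff^\perp$), and noting that $\textup{Proj}_\Ff$ annihilates $\breve\Fx_t$, I would obtain \eqref{eq:dynamics-xp} with $x_t^p = \textup{Proj}_\Ff(\Fx_t^\Ff) = \textup{Proj}_\Ff(\Fx_t)$, hence $x_0^p = \textup{Proj}_\Ff(\Fx_0)$. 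For the cost I would substitute $\BQ = L_{\textup{q}}\BI + D_{\textup{q}}\FQ$ and $\BQ_T = L_{\textup{q}_\textup{T}}\BI + D_{\textup{q}_\textup{T}}\FQ_T$ from (A3) and apply the quadratic-form identity $\langle[D\FA]\Fv,\Fv\rangle = \textup{Proj}_\Ff(\Fv)^\TRANS\textup{Proj}_\Ff(D\FA)\textup{Proj}_\Ff(\Fv)$ of the preceding lemma (valid on $(\SBS_\Ff)^n$ under (A5)(ii), with the analogous relation for $\BI$) term by term, together with $\langle\Fu_t^\Ff,\Fu_t^\Ff\rangle = (u_t^p)^\TRANS u_t^p$ from isometry; this produces \eqref{eq:decouple-cost-xp}.

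For the $\SBS^\perp$-component the operators $L_{\textup{a}}\BI$, $L_{\textup{b}}\BI$, $L_{\textup{q}}\BI$ and $L_{\textup{q}_\textup{T}}\BI$ act pointwise on $[0,1]$, so writing $\breve\Fx_t^\gamma \DEFINED \breve\Fx_t(\gamma)\in\BR^n$ and $\breve\Fu_t^\gamma \DEFINED \breve\Fu_t(\gamma)\in\BR^n$, equation \eqref{eq:aux-dya-lowdim} reduces, for almost every $\gamma\in[0,1]$, to the $n$-dimensional ODE \eqref{eq:dynamics-xp-aux}, while \eqref{eq:aux-cost-lowdim} becomes $J_{\SBS^\perp}(\breve\Fu) = \int_0^1 J_{\SBS^\perp}(\breve\Fu^\gamma)\,d\gamma$ with $J_{\SBS^\perp}(\breve\Fu^\gamma)$ as in \eqref{eq:cost-xp-aux}, using $\langle\breve\Fv, L\BI\,\breve\Fv\rangle = \int_0^1 \breve\Fv(\gamma)^\TRANS L\,\breve\Fv(\gamma)\,d\gamma$ and Fubini's theorem to interchange the $t$- and $\gamma$-integrals. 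The stated initial condition $\breve\Fx_0^\gamma = \Fx_0^\gamma - \sum_{\ell=1}^d x_{\ell 0}^p\Ff_\ell(\gamma)$ is simply the pointwise evaluation at $\gamma$ of the component-wise decomposition $\breve\Fx_0 = \Fx_0 - \Fx_0^\Ff$, with $\Fx_0^\Ff$ reconstituted from the coordinates $x_0^p$.

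Finally, since $\Fu^\Ff$ (equivalently $u^p$) and $\breve\Fu$ (equivalently the family $\{\breve\Fu^\gamma\}$) can be chosen independently and the two parts evolve with decoupled dynamics and additive cost, minimizing $J$ separates into minimizing $J_\SBS(u^p)$ subject to \eqref{eq:dynamics-xp} and minimizing $\int_0^1 J_{\SBS^\perp}(\breve\Fu^\gamma)\,d\gamma$ subject to \eqref{eq:dynamics-xp-aux} --- each handled by standard finite-dimensional LQR theory. I expect the only genuinely delicate step to be this last reduction: one must justify that minimizing the integral over $\gamma$ is achieved by solving the $n$-dimensional LQR \eqref{eq:dynamics-xp-aux}--\eqref{eq:cost-xp-aux} for almost every $\gamma$ and that the resulting pointwise-optimal controls reassemble into an admissible $\breve\Fu\in L^2([0,T];(L^2[0,1])^n)$. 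This holds because the optimal feedback $-L_{\textup{b}}^\TRANS\breve P(t)$ obtained from the single $n\times n$ Riccati equation with data $(L_{\textup{a}},L_{\textup{b}},L_{\textup{q}},L_{\textup{q}_\textup{T}})$ is independent of $\gamma$, so $\breve\Fu_t(\gamma) = -L_{\textup{b}}^\TRANS\breve P(t)\,\breve\Fx_t(\gamma)$ is jointly measurable and $\breve\Fx$ stays in the correct space by Gr\"onwall estimates uniform in $\gamma$; and since the cost is additive over $\gamma$ with decoupled dynamics, no coupling between distinct values of $\gamma$ can reduce the total cost. (Existence and uniqueness of all solutions involved follow from Proposition~\ref{prop:existence-unique-opt-sol} for the original problem and classical LQR theory for each finite-dimensional piece.)
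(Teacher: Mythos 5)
Your proposal is correct and follows essentially the same route as the paper's proof: project the $\SBS$-part of the decoupled dynamics and cost via $\textup{Proj}_\Ff$ to obtain the $nd$-dimensional problem, observe that the $\SBS^\perp$-part acts pointwise in $\gamma$ so it reduces to an $n$-dimensional problem for almost every $\gamma$, and use $J_{\SBS^\perp}(\breve\Fu)=\int_0^1 J_{\SBS^\perp}(\breve\Fu^\gamma)\,d\gamma$ with non-negative integrands to justify that the minimization decouples. Your added care about the isometry of $\textup{Proj}_\Ff$ on $(\SBS_\Ff)^n$ and about measurability and admissibility of the reassembled pointwise-optimal control (via the $\gamma$-independent feedback gain) fills in details the paper leaves implicit, but it is the same argument.
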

\begin{proof}
By performing $\text{Proj}_{\Ff}(\cdot)$ on both sides of \eqref{eq:decouple-dyn-lowdim}, we obtain
\eqref{eq:dynamics-xp}.
 The same projection of \eqref{eq:decouple-cost-lowdim} results in \eqref{eq:decouple-cost-xp}.  
The auxiliary problem defined by \eqref{eq:aux-dya-lowdim} and \eqref{eq:aux-cost-lowdim} is the same as the problem defined by \eqref{eq:dynamics-xp-aux} and \eqref{eq:cost-xp-aux} \text{for almost all} $\gamma \in [0,1]$, since the definition of the dynamics is pointwise  and 
	$J_\mathcal{S^\perp}(\breve \Fu) = \int_0^1 J_\mathcal{S^\perp}(\breve \Fu^\gamma) d\gamma$
where $J_\mathcal{S^\perp}(\breve \Fu^\gamma)\geq 0$ for each $\gamma \in  [0,1]$, that is, $J_\mathcal{S^\perp}(\breve \Fu)$ is the convex combination of all (non-negative) elements in $\{J_\mathcal{S^\perp}(\breve \Fu^\gamma): \gamma \in [0,1]\}$.
\end{proof}

\begin{figure*}[htb] 
\centering
\subfloat[Projections and auxiliary signals of both state and control]{
	\includegraphics[height=5.5cm,trim = {1cm 0 7.7cm 0}, clip]{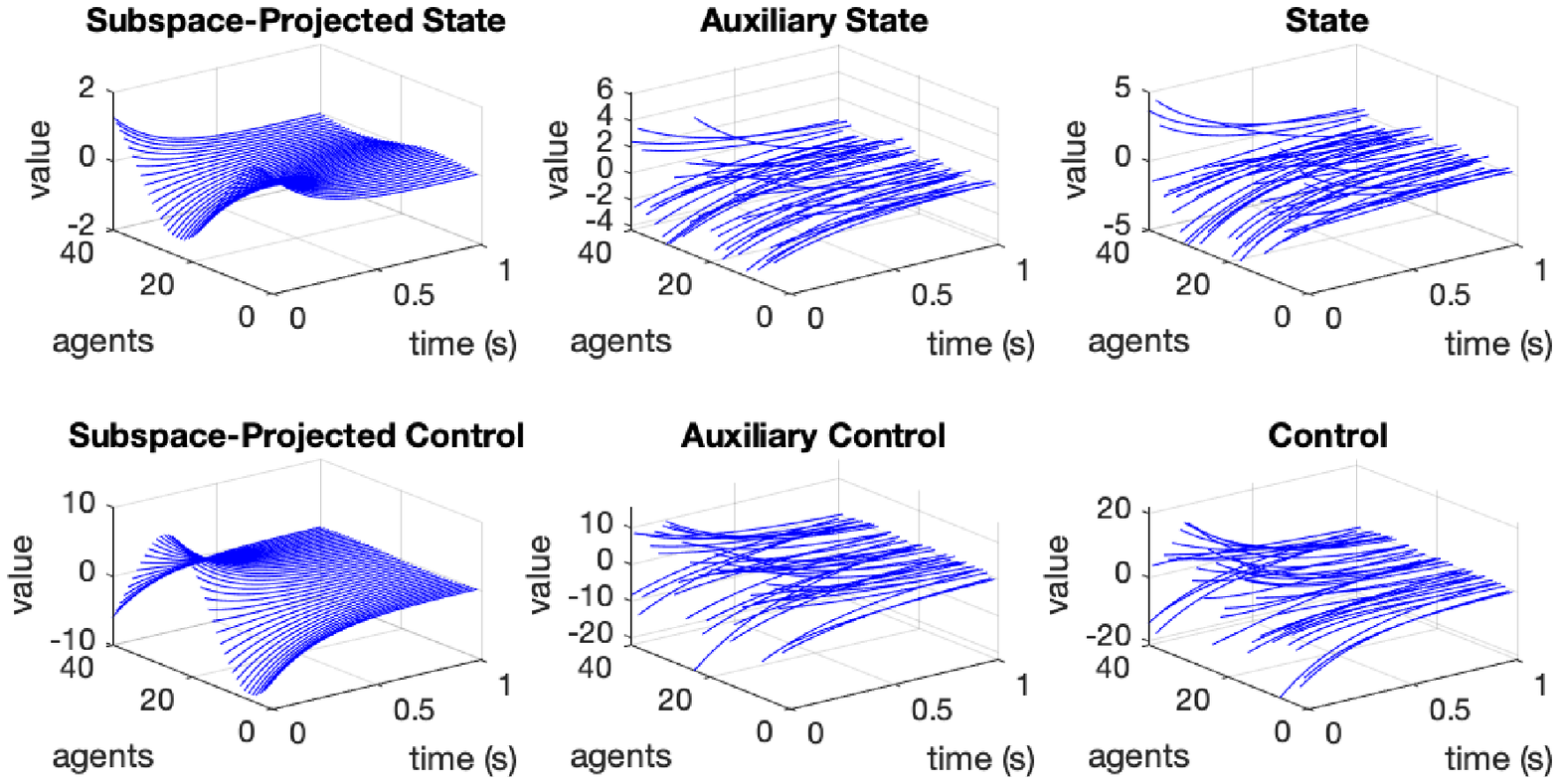}}	\quad
\subfloat[Comparison between the exact control and the centralized optimal control]{
	\includegraphics[height=5.5cm,trim = {1cm 0 1.5cm 0}, clip]{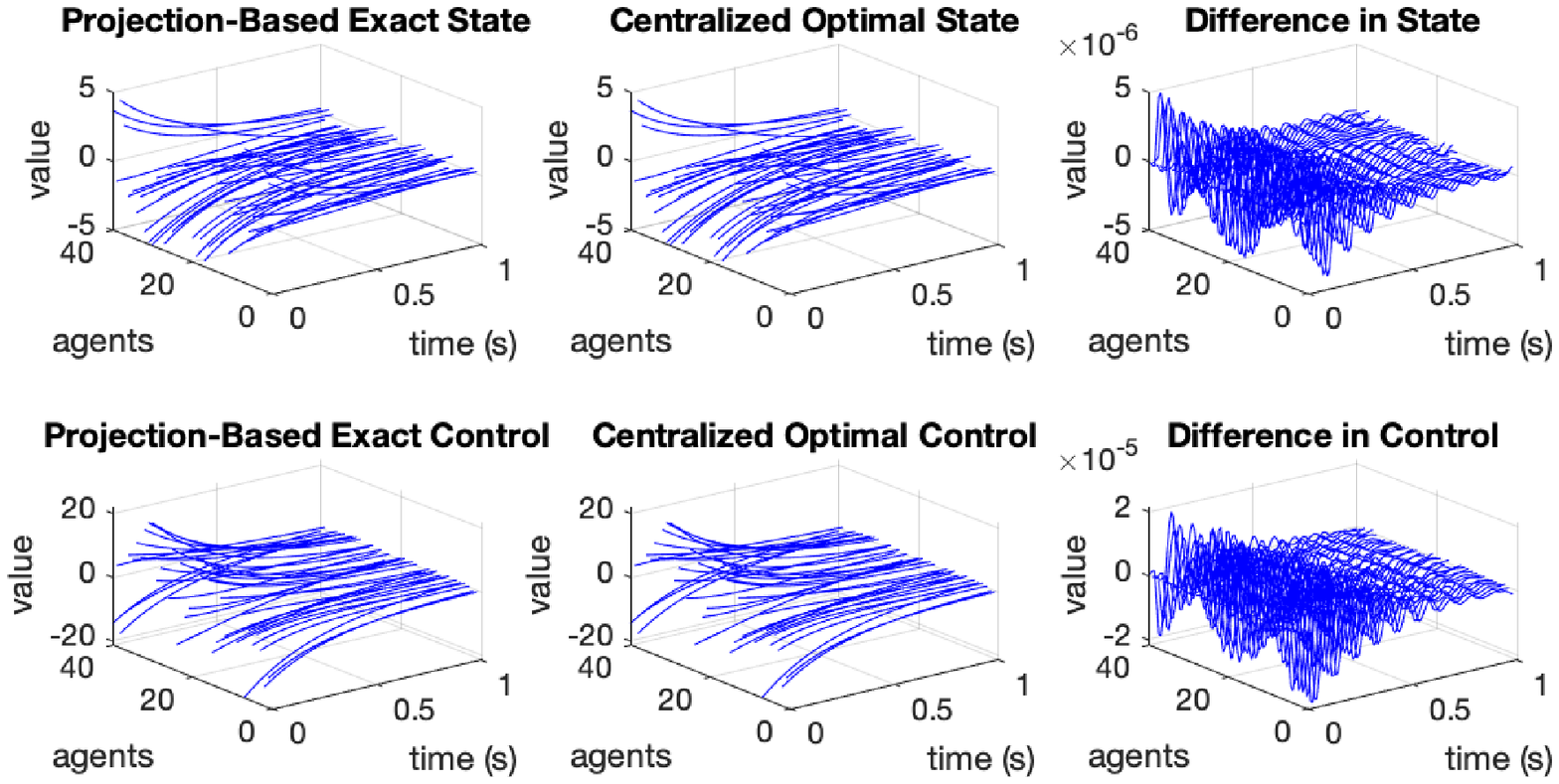} } 

	\caption{
	{The term ``projection-based" refers to the result generated from the method developed in Section \ref{sec:exact-control}, while centralized optimal controls are generated by directly solving the problem with state space discretization. }
	 {The differences between the trajectories generated based on these two different solution methods are  mainly due to numerical approximations and are shown in the last row of (b). Following the projection-based solution, the trajectories for the projected dynamics and those for  the auxiliary dynamics are respectively shown in the 1st and 2nd columns of (a).}}  
	\label{fig:simulation}
\end{figure*}

At this stage, the optimal control of an infinite network can be solved based on the optimal control solution to two decoupled LQR problems, where one requires solving a Riccati equation of dimension ${nd\times nd}$ and the other requires solving a Riccati equation of dimension $n\times n$.  We note that the system dynamics \eqref{eq:dynamics-xp-aux} of the auxiliary problem is infinite dimensional as $\gamma$ takes values in the interval $[0,1]$.

\section{Exact Control}\label{sec:exact-control}
\begin{theorem}\label{thm:main-result}
	Under \textup{(A1)-(A5)},  the optimal control for the original problem defined by \eqref{eq:dynamics} and \eqref{eq:cost} is unique and is given by 
\begin{equation}\label{eq:control-combined}
	\Fu_t^o  = \breve \Fu_t^o + \sum_{\ell=1}^du_t^{op\ell}\Ff_\ell =:  \breve \Fu_t^o + \Ff \circ u_t^{op}
\end{equation}
where $\Ff =\{\Ff_1,\ldots, \Ff_d\}$,
\begin{equation} \label{eq:opt-control-sub}
\begin{aligned}
u_t^{op} =& -(I\otimes L_{\textup{b}}  +B\otimes D_{\textup{b}})^\TRANS \Pi_t x_t^{op},\\
	-\dot\Pi_t =& (I\otimes L_{\textup{a}} + A\otimes D_{\textup{a}}) ^\TRANS \Pi_t + \Pi_t (I\otimes L_{\textup{a}} + A\otimes D_{\textup{a}}) \\
	&   - \Pi_t (I\otimes L_{\textup{b}}  +B\otimes D_{\textup{b}})(I\otimes L_{\textup{b}}  +B\otimes D_{\textup{b}})^\TRANS \Pi_t\\
	& + (I\otimes L_{\textup{q}} +{Q}\otimes D_{q}), \\
	 \Pi_T =& I\otimes L_{\textup{q}_\textup{T}} +{Q}_T\otimes D_{\textup{q}_\textup{T}},
\end{aligned}
\end{equation}
and the optimal control in the auxiliary direction is given by 
\begin{equation}\label{eq:opt-control-aux}
	\begin{aligned}
		\breve \Fu_t^{o\gamma} &= - L_{\textup{b}}^\TRANS  \pi_t \breve \Fx_t^{\gamma}, \quad \text{for almost all } \gamma \in  [0,1],\\
		-\dot \pi_t &=  L_{\textup{a}}^\TRANS \pi_t + \pi_t L_{\textup{a}} - \pi_t L_{\textup{b}} L_{\textup{b}}^\TRANS\pi_t + L_{\textup{q}}, \quad  
		\pi_T  = L_{\textup{q}_\textup{T}}.
	\end{aligned}
\end{equation}
\end{theorem}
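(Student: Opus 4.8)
The plan is to reduce the theorem to two standard finite-dimensional LQR problems by invoking Proposition~\ref{eq:decoupled-lqr-problems}. Under (A1)--(A5) that proposition shows the original problem is equivalent to: (i) the $nd$-dimensional LQR problem with dynamics \eqref{eq:dynamics-xp} and cost \eqref{eq:decouple-cost-xp} for $x_t^p=\text{Proj}_\Ff(\Fx_t)$; and (ii) the family, indexed by $\gamma\in[0,1]$, of $n$-dimensional LQR problems with dynamics \eqref{eq:dynamics-xp-aux} and cost \eqref{eq:cost-xp-aux} for $\breve\Fx_t^\gamma$. Since $J(\Fu)=J_\mathcal{S}(u^p)+J_{\mathcal{S}^\perp}(\breve\Fu)$ with the two summands depending on the disjoint parts $\Fu^\Ff$ and $\breve\Fu$ of the control (Lemmas~\ref{lem:dynamics-decouple} and~\ref{lem:cost-decouple} and their corollaries), and since $J_{\mathcal{S}^\perp}(\breve\Fu)=\int_0^1 J_{\mathcal{S}^\perp}(\breve\Fu^\gamma)\,d\gamma$ with a nonnegative integrand and no coupling across $\gamma$, minimizing $J$ is equivalent to minimizing $J_\mathcal{S}$ over $u^p$ and, for almost every $\gamma$, minimizing $J_{\mathcal{S}^\perp}(\breve\Fu^\gamma)$ over $\breve\Fu^\gamma$. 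The optimal control is then reassembled through the component-wise decomposition $\Fu_t^o=\breve\Fu_t^o+\sum_{\ell=1}^d u_t^{op\ell}\Ff_\ell$, using that $\{\Ff_1,\dots,\Ff_d\}$ is an orthonormal basis of $\SBS_\Ff=\SBS$ so that $\sum_\ell\langle\Fu_{it},\Ff_\ell\rangle\Ff_\ell$ recovers the $(\SBS)^n$-component of $\Fu_t$.

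For problem (i) I would first verify that the state and terminal weights $I\otimes L_{\textup{q}}+Q\otimes D_{\textup{q}}$ and $I\otimes L_{\textup{q}_\textup{T}}+Q_T\otimes D_{\textup{q}_\textup{T}}$ are symmetric and positive semidefinite. Symmetry is immediate from the definitions of $Q,Q_T$ as Gram-type matrices $Q_{\ell k}=\langle\FQ\Ff_\ell,\Ff_k\rangle$ together with the Hermitian property in (A1); nonnegativity follows because, identifying $v\in\BR^{nd}$ with $\Fv\in(\SBS_\Ff)^n$ via $\text{Proj}_\Ff$, the projection identities give $v^\TRANS(I\otimes L_{\textup{q}}+Q\otimes D_{\textup{q}})v=\langle\Fv,\BQ\Fv\rangle\ge 0$ by (A3) and (A1), and likewise for the terminal weight. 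With these facts, standard finite-dimensional LQR theory (or Proposition~\ref{prop:existence-unique-opt-sol} specialized to finite dimensions) furnishes the unique solution $\Pi_t$ of the matrix Riccati equation in \eqref{eq:opt-control-sub} and the unique optimal feedback $u_t^{op}=-(I\otimes L_{\textup{b}}+B\otimes D_{\textup{b}})^\TRANS\Pi_t x_t^{op}$.

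For problem (ii) I would observe that the matrices $L_{\textup{a}},L_{\textup{b}},L_{\textup{q}},L_{\textup{q}_\textup{T}}$ appearing in \eqref{eq:dynamics-xp-aux} and \eqref{eq:cost-xp-aux} do not depend on $\gamma$ --- only the initial condition $\breve\Fx_0^\gamma$ does --- and that $L_{\textup{q}},L_{\textup{q}_\textup{T}}$ are symmetric positive semidefinite by evaluating $\langle\Fv,\BQ\Fv\rangle$ and $\langle\Fv,\BQ_T\Fv\rangle$ on $\Fv\in(\SBS^\perp)^n$ and using (A5)(ii) together with (A1). Hence a single matrix Riccati equation, namely \eqref{eq:opt-control-aux}, governs all $\gamma$, its solution $\pi_t$ is $\gamma$-independent, and for each $\gamma$ the optimal control is $\breve\Fu_t^{o\gamma}=-L_{\textup{b}}^\TRANS\pi_t\breve\Fx_t^\gamma$. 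The delicate point is to confirm that this pointwise-in-$\gamma$ construction actually solves the infinite-dimensional auxiliary LQR problem on $(\SBS^\perp)^n$: because $\pi_t$ does not depend on $\gamma$, the closed-loop map $\breve\Fx_0^\gamma\mapsto(\breve\Fx_t^\gamma,\breve\Fu_t^{o\gamma})$ is one bounded time-varying linear map applied pointwise to $\breve\Fx_0=\Fx_0-\Fx_0^\Ff\in(L^2[0,1])^n$, so strong measurability and square-integrability in $\gamma$ are preserved, $\breve\Fu^o\in L^2([0,T];(L^2[0,1])^n)$, and no admissible $\breve\Fu$ can beat it since $J_{\mathcal{S}^\perp}(\breve\Fu)=\int_0^1 J_{\mathcal{S}^\perp}(\breve\Fu^\gamma)\,d\gamma$ and the Riccati representation of the value function bounds each integrand from below by $\breve\Fx_0^{\gamma\TRANS}\pi_0\breve\Fx_0^\gamma$.

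Finally, combining the two optimal pieces as in \eqref{eq:control-combined} yields a control that minimizes $J$ by the additive decoupling above, and uniqueness of $\Fu^o$ follows from uniqueness of the optimizer in each decoupled problem --- a consequence of strict convexity of each cost in its control (the control penalty being $\|\cdot\|^2$) together with the established uniqueness of the Riccati solutions --- or simply from Proposition~\ref{prop:existence-unique-opt-sol}. I expect the measurable-assembly step for problem (ii) to be the main obstacle: making rigorous that the $\gamma$-indexed family of finite-dimensional LQR solutions patches together into the optimal control of the infinite-dimensional problem on $(\SBS^\perp)^n$, rather than merely into a control that is optimal within some restricted class; the $\gamma$-independence of $\pi_t$ is exactly what makes this go through, and I would highlight it in the argument.
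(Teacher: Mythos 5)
Your proposal is correct and follows essentially the same route as the paper: invoke Proposition~\ref{eq:decoupled-lqr-problems} to split the problem into the $nd$-dimensional projected LQR problem and the $\gamma$-indexed family of $n$-dimensional auxiliary LQR problems, solve each by classical finite-dimensional LQR theory, and reassemble the optimizer through the unique component-wise decomposition of the control. The paper's own proof is terser, leaving implicit the positive-semidefiniteness of the projected weights and the measurable patching of the $\gamma$-indexed solutions via the $\gamma$-independence of $\pi_t$, both of which you correctly supply.
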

\begin{proof}
We have two decoupled LQR problems: (i) the LQR problem defined by \eqref{eq:dynamics-xp} and \eqref{eq:decouple-cost-xp}, and (ii) the LQR problem given by \eqref{eq:dynamics-xp-aux} and \eqref{eq:cost-xp-aux}. 
	By classical finite dimensional LQR \cite{liberzon2011calculus}, the optimal control law for each LQR problem is unique and is given by \eqref{eq:opt-control-sub} and  \eqref{eq:opt-control-aux}.
	Then recovering the unique decomposition of the control input into the component spaces $(\mathcal{S}^\perp)^n$ and $(\mathcal{S})^n$ as $\Fu_t = \breve \Fu_t  + \Fu_t^{\Ff }$, we obtain the optimal control law in \eqref{eq:control-combined} for the original problem. 
\end{proof}

{Each individual agent may compute its part of the optimal control solution locally and then solve the original optimal control problem collaboratively.}
\begin{corollary}\label{cor:local collaborative}
	Under \textup{(A1)-(A5)},  the nodal collaborative optimal control solution to the original problem defined by \eqref{eq:dynamics} and \eqref{eq:cost} is given as follows: {\text{for almost all} $\gamma \in  [0,1]$,}
\begin{equation}
	\Fu_t^{o\gamma}  = \breve \Fu_t^{o\gamma} + \sum_{\ell=1}^du_t^{op\ell}\Ff_\ell(\gamma) =:  \breve \Fu_t^{o\gamma} + \Ff(\gamma) \circ u_t^{op}
\end{equation}
where $\Ff(\gamma) =[\Ff_1(\gamma),\ldots, \Ff_d(\gamma)]$, $u_t^{op}$ is given by \eqref{eq:opt-control-sub} and $\breve \Fu_t^{o\gamma}$ is given by \eqref{eq:opt-control-aux}.
 \end{corollary}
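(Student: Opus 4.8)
The plan is to obtain the corollary as the pointwise-in-$\gamma$ specialization of Theorem \ref{thm:main-result}, using the decomposition already established in Proposition \ref{eq:decoupled-lqr-problems}. Recall that under (A1)--(A5) the original problem splits into the single finite-dimensional projected LQR problem \eqref{eq:dynamics-xp}--\eqref{eq:decouple-cost-xp} on $\BR^{nd}$ and the family of finite-dimensional auxiliary LQR problems \eqref{eq:dynamics-xp-aux}--\eqref{eq:cost-xp-aux} indexed by $\gamma\in[0,1]$, with the auxiliary cost decomposing as the $\gamma$-integral $J_{\mathcal{S}^\perp}(\breve\Fu)=\int_0^1 J_{\mathcal{S}^\perp}(\breve\Fu^\gamma)\,d\gamma$ of nonnegative terms. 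First I would invoke Theorem \ref{thm:main-result} to get the unique optimal control $\Fu_t^o\in(L^2[0,1])^n$ in the form \eqref{eq:control-combined}, together with the shared $nd\times nd$ solution $\Pi_t$ of \eqref{eq:opt-control-sub} and the shared $n\times n$ solution $\pi_t$ of \eqref{eq:opt-control-aux}.

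Second, I would evaluate \eqref{eq:control-combined} at a point $\gamma\in[0,1]$. The term $\Ff\circ u_t^{op}$ has $i$th function component $\sum_{\ell=1}^d u_t^{op\ell}\Ff_\ell$, whose value at $\gamma$ is $\sum_{\ell=1}^d u_t^{op\ell}\Ff_\ell(\gamma)=\Ff(\gamma)\circ u_t^{op}$, where $u_t^{op}$ is the common $\BR^{nd}$-valued signal driven by the projected (aggregate) state $x_t^{op}=\textup{Proj}_\Ff(\Fx_t^o)$. For the auxiliary term, the crucial observation is that the Riccati equation in \eqref{eq:opt-control-aux} has coefficients $L_{\textup{a}},L_{\textup{b}},L_{\textup{q}},L_{\textup{q}_\textup{T}}$ that do not depend on $\gamma$, so a single solution $\pi_t$ serves all $\gamma$; only the auxiliary state $\breve\Fx_t^\gamma=\Fx_t^{o\gamma}-\sum_{\ell=1}^d x_{\ell t}^{op}\Ff_\ell(\gamma)$ varies with $\gamma$, giving $\breve\Fu_t^{o\gamma}=-L_{\textup{b}}^\TRANS\pi_t\breve\Fx_t^\gamma$. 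Summing the two pieces yields the stated expression $\Fu_t^{o\gamma}=\breve\Fu_t^{o\gamma}+\Ff(\gamma)\circ u_t^{op}$.

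Third, I would confirm optimality in the required sense. Because $J(\Fu)=J_{\mathcal{S}}(u^p)+\int_0^1 J_{\mathcal{S}^\perp}(\breve\Fu^\gamma)\,d\gamma$ with each summand nonnegative and the integrand decoupled across $\gamma$, minimizing $J$ is equivalent to minimizing $J_{\mathcal{S}}$ and, for almost all $\gamma$, minimizing each $J_{\mathcal{S}^\perp}(\breve\Fu^\gamma)$ separately; uniqueness of the classical finite-dimensional LQR minimizer in each case then forces the family $\{\Fu_t^{o\gamma}\}$ to agree, a.e.\ in $\gamma$, with the pointwise values of the unique $\Fu_t^o$ from Theorem \ref{thm:main-result}. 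Finally I would record the ``nodal collaborative'' reading: node $\gamma$ needs only its own basis values $\Ff(\gamma)$ and local state $\Fx_t^{o\gamma}$, plus the shared objects $u_t^{op}$, $\Pi_t$, $\pi_t$ (the $\pi_t$-based feedback being identical for all nodes), to implement $\Fu_t^{o\gamma}$.

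I expect the only delicate point — not a genuine obstacle — to be the ``almost all $\gamma$'' bookkeeping: the auxiliary dynamics and cost in Proposition \ref{eq:decoupled-lqr-problems} are defined pointwise only up to $L^2$-equivalence, so the per-$\gamma$ minimizer and the identity $\Fu_t^{o\gamma}=\breve\Fu_t^{o\gamma}+\Ff(\gamma)\circ u_t^{op}$ are asserted only for almost every $\gamma$, exactly as in the statement, and I would flag this explicitly rather than claim it for every $\gamma$.
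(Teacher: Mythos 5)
Your proposal is correct and follows essentially the same route as the paper: the paper states Corollary \ref{cor:local collaborative} as an immediate pointwise-in-$\gamma$ specialization of Theorem \ref{thm:main-result}, relying on the decomposition of Proposition \ref{eq:decoupled-lqr-problems} (including the identity $J_{\mathcal{S}^\perp}(\breve\Fu)=\int_0^1 J_{\mathcal{S}^\perp}(\breve\Fu^\gamma)\,d\gamma$ and the $\gamma$-independence of $\pi_t$), exactly as you argue. Your explicit handling of the per-$\gamma$ optimality and the ``almost all $\gamma$'' caveat is a slightly more careful write-up of what the paper leaves implicit, but it is not a different method.
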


To implement the collaborative nodal optimal control, each agent needs to know the projection of states into the subspace, i.e. $x_t^p = \text{Proj}_\Ff(\Fx_t^\Ff)$. The projection represents certain aggregate information of the state in certain invariant subspace of the underlying graphon couplings. 
Agent $\gamma \in  [0,1]$ can then compute ${\breve \Fx_t}^\gamma = \Fx_t^\gamma - \Ff(\gamma)\circ x_t^{p}$ together with the local state $\Fx_t^\gamma$; {the state information $x_t^p, t\in[0,T]$, may be  precomputed based on the aggregate initial condition $x_0^p$.} 

For decentralized solutions in a competitive environment, readers are referred to the work on graphon mean field games~\cite{PeterMinyiCDC19GMFG}.
\subsection{Illustrative Example}\label{sec:numerical-example}
Let $\FA$, $\FB$, $\FQ$ and $\FQ_T$ be given by the following: for all  $(x,y) \in [0,1]^2$, 
\begin{equation}
\begin{aligned}
&	\FA (x,y) = 2\cos(2\pi(x-y))+\sin(2\pi(x+y)), \\
&  \FB= \cos(2\pi (x+y)),\quad \FQ (x, y) = \sin(2\pi x) \sin(2\pi y),\\& \FQ_T(x,y)= \cos(2\pi x) \cos(2\pi y).
\end{aligned}
\end{equation}
Consider a subspace $\mathcal{S}_\Ff=\text{span}\{\Ff_1, \Ff_2\}$ with $\Ff_1=\sqrt{2}\sin(2\pi \cdot)$ and $\Ff_2=\sqrt{2}\cos(2\pi \cdot)$ in $L^2[0,1]$. Then $\mathcal{S}_\Ff$ is an invariant subspace of $\FA$, $\FB$, $\FQ$ and $\FQ_T$.
Then projecting these operators into the subspace yields
\begin{equation*}
    A = \text{Proj}_\Ff(\FA) = \MATRIX{1 &\frac12\\\frac12&1\\ },\quad B=\text{Proj}_\Ff(\FB)= \MATRIX{-\frac12&0\\0&\frac12\\ },
\end{equation*}
\begin{equation*}
    Q= \text{Proj}_\Ff(\FQ) = \MATRIX{\frac12&0\\0&0\\ },\quad Q_T= \text{Proj}_\Ff(\FQ_T) = \MATRIX{0&0\\0&\frac12\\ }.
\end{equation*}
Obviously, the projections of these coupling operators into $(\SBS_\Ff)^\perp$ is zeros.  Hence (A5) is satisfied. 
Let $n=1$, $L_{\textup{a}}=2$, $L_{\textup{b}} =1.2$, $L_{\textup{q}}=D_{\textup{a}}=D_{\textup{b}}=D_{\textup{q}}=D_{\textup{q}_\textup{T}} =1$ and $L_{\textup{q}_\textup{T}}=2$. 

Following Proposition \ref{eq:decoupled-lqr-problems}, the original LQR problem for the graphon dynamical system with dynamics in \eqref{eq:dynamics} and cost in \eqref{eq:cost} can be transformed into the LQR control problems defined by \eqref{eq:dynamics-xp} \& \eqref{eq:decouple-cost-xp}, and \eqref{eq:dynamics-xp-aux} \& \eqref{eq:cost-xp-aux}. Based on Corollary \ref{cor:local collaborative}, the original problem is solved in the low dimensional subspace and each agent generate its control law and implements it locally.

 A simulation result is demonstrated in Fig. \ref{fig:simulation}; {it was carried out for a graphon dynamical system with step function approximation and state space discretization based on the uniform partition of size 40. 
	  Note that the step function system represents a network system consisting of 40 nodal agents where each agent is indexed by an interval of length $1/{40}$ in $[0,1]$. The initial conditions for all agents are uniformly sampled from $[-5,5]$.  Each agent locally generates its control input according to Corollary \ref{cor:local collaborative}, and solves one $2\times 2$ Riccati equation and one scalar Riccati equation. {As a comparison, the direct solution requires solving a Riccati equation of dimension $40\times 40$.} }
\section{Approximate Control} \label{sec:approxmiate-control}
If Assumption (A5)-(ii) is not satisfied, that is, $\FA, \FB, \FQ$ and $\FQ_T$ do not admit exact low-rank representations in  some common invariant subspace, one may approximate these operators in some finite-dimensional subspace where their eigenvalues are significant, since these operators are (compact) Hilbert-Schmidt integral operators and have discrete spectrum with zero as the only accumulation point. 
{
More explicitly, since for a graphon $\FA \in \ES_c$, we have $\|\FA\|_2 < \infty$ and  hence the operator $\FA$ is a compact operator according to \cite[Chapter 2, Proposition 4.7]{conway2013course}.  Therefore it has a countable spectral decomposition
	$\FA(x,y) = \sum_{i=1}^{\infty} \lambda_\ell \Ff_\ell(x) \Ff_\ell(y), ~ (x,y)\in[0,1]^2,$
where the convergence is in the $L^2{[0,1]^2}$ sense, $\{\lambda_1, \lambda_2,....\}$  is the set of eigenvalues (which are not necessarily distinct) with decreasing absolute values, and $\{\Ff_1, \Ff_2,...\}$ represents the set of the corresponding orthonormal eigenfunctions (i.e. $\|\Ff_\ell\|_2=1$, and $\langle \Ff_\ell, \Ff_k\rangle =0$ if $l\neq k$). 
The only accumulation point of the eigenvalues is zero \cite{lovasz2012large}, that is, $\lim_{\ell\rightarrow \infty} \lambda_\ell =0.$} %

For two graphon operators $\FA_\mathcal{S}$ and $\FA$, $\FA_\mathcal{S}$ is called the \emph{equivalent linear operator} of $\FA$ in $\mathcal{S}$ if  for all $\Fv \in \mathcal{S}$, $\textup{Proj}(\FA \Fv) = \textup{Proj}(\FA_\mathcal{S} \Fv)$ and the range of $\FA_\mathcal{S}$ lies in $\mathcal{S}$. Let $\FA = \FA_\mathcal{S} + \FA_\mathcal{S^\perp} \in \ESC$ where $\FA_\mathcal{S}$ (resp. $\FA_\mathcal{S^\perp}$) is the equivalent linear operator of $\FA$ in $\mathcal{S}$ (resp. $\mathcal{S}^\perp$).  Similarly define $\FB_{\mathcal{S}}$, $\FB_{\mathcal{S}^\perp}$,$\FQ_{\mathcal{S}}$, $\FQ_{\mathcal{S}^\perp}$,$\FQ_{T\mathcal{S}}$ and $\FQ_{T\mathcal{S}^\perp}$.

Following Lemma \ref{lem:dynamics-decouple}, the dynamics can be decoupled as %
\begin{align}
		\dot \Fx_t^\Ff = & [L_{\textup{a}} \BI +D_{\textup{a}}\FA_\mathcal{S}] \Fx_t^\Ff + [L_{\textup{b}} \BI +D_{\textup{b}}\FB_\mathcal{S}]\Fu_t^\Ff, 
		\\
		\dot {\breve \Fx}_t = & [L_{\textup{a}} \BI +D_{\textup{a}}\FA_\mathcal{S^\perp}] \breve \Fx_t+ [L_{\textup{b}} \BI +D_{\textup{b}}\FB_\mathcal{S^\perp}]\breve \Fu_t. 
		\label{eq:aux-actual-dyn}
\end{align}
Applying the control law in Theorem \ref{thm:main-result} will ignore the effect of $\FA_{\mathcal{S^\perp}}$, $\FB_{\mathcal{S^\perp}}$, $\FQ_{\mathcal{S^\perp}}$ and $\FQ_{T\mathcal{S^\perp}}$. A special case of this type of approximation is explored and discussed in \cite{ShuangPeterCDC19W2}.
\begin{figure*}[htb] 
\centering
\subfloat[Projections and auxiliary signals of both state and control under the projection-based approximate control]{
	\includegraphics[height=5.5cm,trim = {1cm 0 7.7cm 0}, clip]{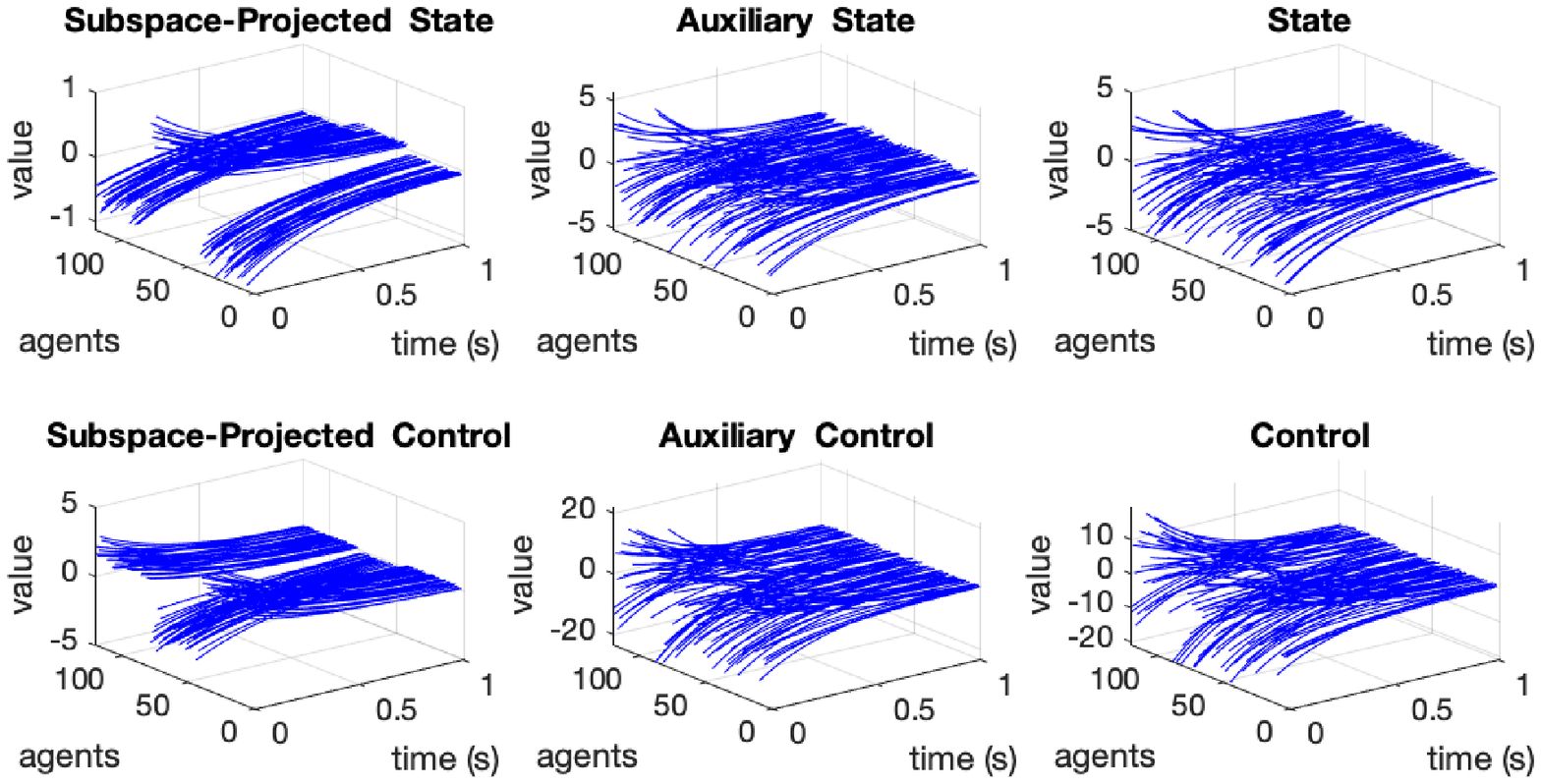}}	
\subfloat[Comparison between the approximate control and the centralized optimal  control]{
	\includegraphics[height=5.5cm,trim = {1cm 0 1.5cm 0}, clip]{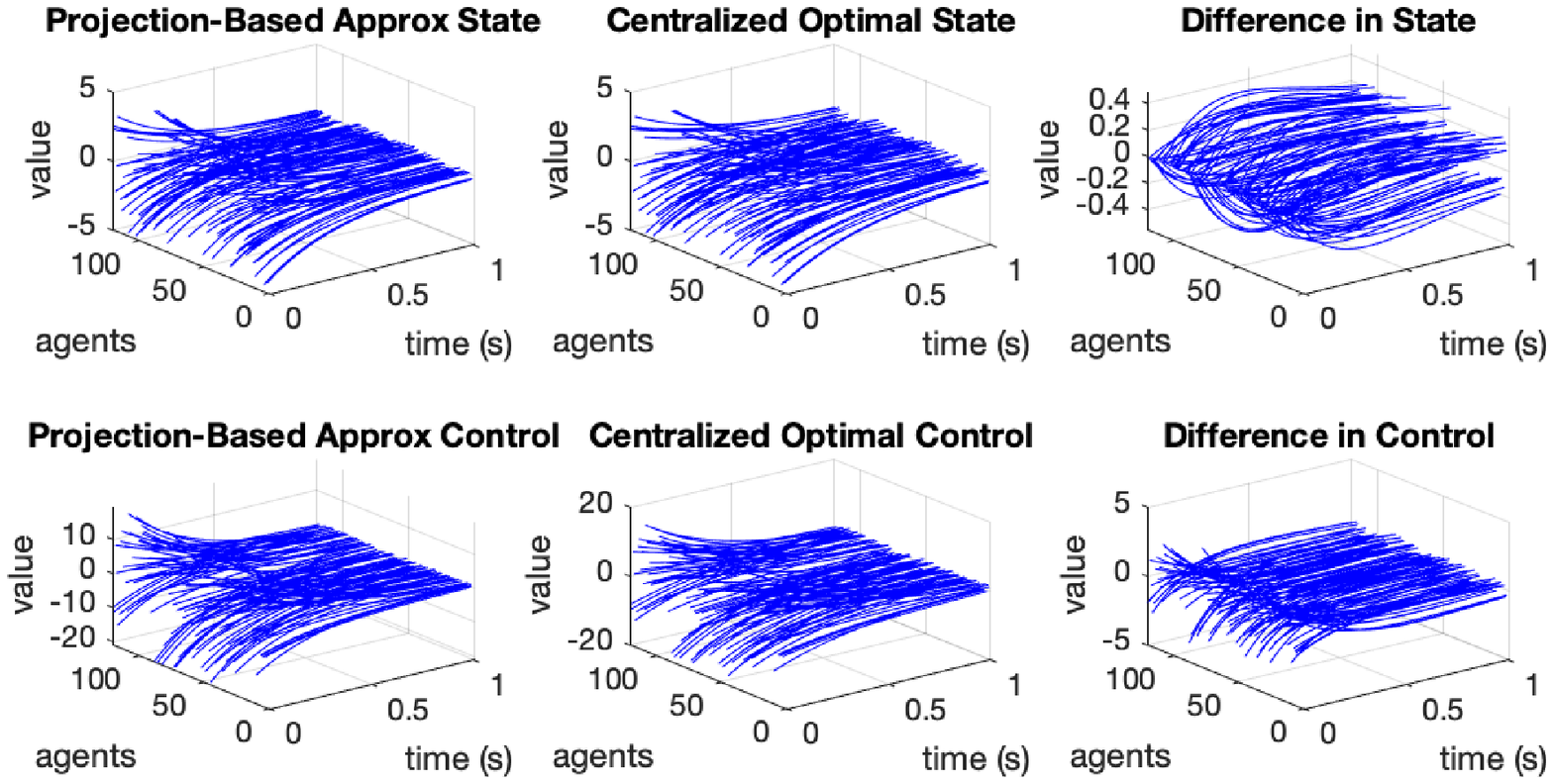} }

	\caption{The term ``projection-based approx" refers to the result generated from the method developed in Section \ref{sec:approxmiate-control}, while ``centralized optimal control or state'' is generated by directly solving the LQR problem. 
    In (a), the first column demonstrates the states and controls associated with projected dynamics under the approximate control, and the second column presents those associated the auxiliary dynamics under  the approximate control.  
 In (b), the first column  illustrates the states and controls under the approximated control, the second column illustrates the states and controls under a direct solution of the LQR problem, and the last column presents the differences in trajectories under two control solutions. The approximate control cost is $2.376\%$ higher than the optimal cost, and the maximum difference in state trajectories is $11.181\%$ of the maximum state under the optimal control. 
    } 
	\label{fig:approximate-control-simulation}
\end{figure*}

To generate approximate control laws that ensure a faster rate of convergence for \eqref{eq:aux-actual-dyn}, a variant of the implementation in Theorem \ref{thm:main-result} can be considered.

\begin{AppImp*}
Consider the case where {$D_{\textup{q}_\textup{T}}>0$, $D_{\textup{q}}\geq 0$, $D_{\textup{b}}L_{\textup{b}}^\TRANS\geq 0$ and the real parts of all the eigenvalues of $D_{\textup{a}}$ are non-negative. Assume the operator norms of $\FA_{\mathcal{S^\perp}}$, $\FB_{\mathcal{S^\perp}}$, $\FQ_{\mathcal{S^\perp}}$ and $\FQ_{T\mathcal{S^\perp}}$ are available for the computation of the control law.}
	Then under \textup{(A1)-(A4)} and \textup{(A5)-(i)},   
	the approximate control law is given by the following
\begin{equation}\label{eq:approx-control-total}
	 \Fu_t = \breve \Fu_t^{app} + \sum_{i=1}^du_t^{opi}\Ff_i =:  \breve \Fu_t^{app} + \Ff \circ u_t^{op}
\end{equation}
where $\Ff =\{\Ff_1,\ldots, \Ff_d\}$,  {$u_t^{op}$ is given by \eqref{eq:opt-control-sub}} and the approximate control $\Fu_t^{app}$ in the auxiliary direction is given by
\begin{equation}\label{eq:approx-opt-control-aux}
	\begin{aligned}
		\breve \Fu_t^{app~\gamma} &= - {L_{\textup{b}}}^\TRANS  \pi_t \breve \Fx_t^\gamma, \quad \text{for almost all } \gamma \in [0,1],\\
		-\dot \pi_t &= (L_{\textup{a}}+ D_{\textup{a}}\|\FA_{\mathcal{S^\perp}}\|_{\textup{op}} )^\TRANS \pi_t + \pi_t (L_{\textup{a}}+ D_{\textup{a}}\|\FA_{\mathcal{S^\perp}}\|_{\textup{op}} ) \\
		&\quad - \pi_t(L_{\textup{b}}^\TRANS L_{\textup{b}}-D_{\textup{b}}L_{\textup{b}}^\TRANS \|\FB_{\mathcal{S^\perp}}\|_{\textup{op}}- L_{\textup{b}}D_{\textup{b}}^\TRANS \|\FB_{\mathcal{S^\perp}}\|_{\textup{op}}) \pi_t \\
		& \quad + L_{\textup{q}}+ D_{\textup{q}} \|\FQ_\mathcal{S^\perp}\|_\textup{op},\\ 
		\pi_T  &= L_{\textup{q}_\textup{T}}+ D_{\textup{q}_\textup{T}} \|\FQ_{T\mathcal{S^\perp}}\|_\textup{op}.
	\end{aligned}
\end{equation}
\end{AppImp*}

The actual dynamics of the auxiliary system is given by \eqref{eq:aux-actual-dyn}. {Since  the operator norms of $\FA_{\mathcal{S^\perp}}$, $\FB_{\mathcal{S^\perp}}$, $\FQ_{\mathcal{S^\perp}}$ and $\FQ_{T\mathcal{S^\perp}}$ are available for  the computation of the control law,} an approximate cost in the auxiliary direction is given by the following form 
\begin{equation}\label{eq:approximate-aux-cost}
    \begin{aligned}
   \tilde{J}_{\mathcal{S^\perp}}(\breve \Fu) 
   &= \int_0^T  \left\{\langle \breve  \Fx_t, [(L_{\textup{q}}  + {D_\textup{q}} \|\FQ_{\mathcal{S^\perp}}\|_{\textup{op}})\BI] \breve  \Fx_t \rangle+ \langle \breve \Fu_t, \breve  \Fu_t \rangle \right\} dt \\& \quad
    +  \langle \breve  \Fx_T, [(L_{\textup{q}_\textup{T}}  + {D_{\textup{q}_\textup{T}}} \|\FQ_{T\mathcal{S^\perp}}\|_{\textup{op}})\BI] \breve \Fx_T\rangle .
    \end{aligned}
\end{equation}
Observe that this cost is always greater than or equal to the actual cost in the auxiliary direction given by
\begin{equation}
\begin{aligned}
   J_{\mathcal{S^\perp}}(\breve \Fu) 
     &= \int_0^T  \left\{\langle \breve  \Fx_t, [L_{\textup{q}} \BI + {D_\textup{q}}\FQ_{\mathcal{S^\perp}}] \breve  \Fx_t \rangle+ \langle \breve \Fu_t, \breve  \Fu_t \rangle \right\} dt \\& \quad
    +  \langle \breve  \Fx_T, [L_{\textup{q}_\textup{T}} \BI + {D_{\textup{q}_\textup{T}}} \FQ_{T \mathcal{S^\perp}}] \breve \Fx_T\rangle.
\end{aligned}
\end{equation}
That is, for all admissible control $\breve \Fu$, $\tilde{J}_{\mathcal{S^\perp}}(\breve \Fu) \geq J_{\mathcal{S^\perp}}(\breve \Fu) $.
The approximate control considered takes the special form
\begin{equation} \label{eq:approx-aux-control}
    \breve \Fu_t^{app}  = - [L_{\textup{b}}^\TRANS \pi_t \BI ]\breve \Fx_t.
\end{equation}
This then yields the closed-loop system dynamics 
\begin{equation}
\begin{aligned}
    \dot{\breve\Fx}_t & = [L_{\textup{a}} \BI +D_{\textup{a}}\FA_\mathcal{S^\perp}] \breve \Fx_t+ [L_{\textup{b}} \BI +D_{\textup{b}}\FB_\mathcal{S^\perp}] [-L_{\textup{b}}^\TRANS \pi_t \BI] \breve\Fx_t.
    \end{aligned}
\end{equation}
Assuming $\pi_{(\cdot)}$ is available (which comes from a Riccati equation to be formulated), by separating the control part, an equivalent closed-loop dynamics is given by
\begin{equation} \label{eq:aux-dynamics-with-special-control}
\begin{aligned}
    \dot{\breve\Fx}_t& =\left[L_{\textup{a}} \BI +D_{\textup{a}}\FA_\mathcal{S^\perp}- D_{\textup{b}} L_{\textup{b}}^\TRANS  \pi_t \FB_\mathcal{S^\perp}\right]\breve \Fx_t+ [L_{\textup{b}}^\TRANS \pi_t \BI]   \breve \Fu_t^{app},
    \end{aligned}
\end{equation}
where $ \breve \Fu_t^{app}  =  - [L_{\textup{b}}^\TRANS \pi_t \BI ]\breve \Fx_t.$
The control solution in \eqref{eq:approx-opt-control-aux} solves optimally the LQR problem with dynamics
\begin{equation} \label{eq:scalar-aux-dynamics}
    \begin{aligned}
    \dot{\breve\Fx}_t& =\left[L_{\textup{a}} \BI +\|\FA_\mathcal{S^\perp}\|_{\textup{op}} D_{\textup{a}}\BI +D_{\textup{b}}L_{\textup{b}}^\TRANS \pi_t \|\FB_\mathcal{S^\perp}\|_{\textup{op}}\BI \right] \breve \Fx_t \\
    &\quad + [L_{\textup{b}}^\TRANS \pi_t \BI]   \breve \Fu_t^{app},
    \end{aligned}
\end{equation}
and cost in \eqref{eq:approximate-aux-cost}.
When the same control feedback gain is applied to the dynamics in \eqref{eq:aux-dynamics-with-special-control},  the close-loop dynamics (projected in the subspace $\mathcal{S^\perp}$) converges to the origin faster than the closed-loop dynamics for \eqref{eq:scalar-aux-dynamics},
since the {real parts of all the values in the spectrum of} following difference operator
\begin{equation*}
\begin{aligned}
	\Delta(t)\triangleq&\left[L_{\textup{a}} \BI +D_{\textup{a}}\FA_\mathcal{S^\perp}- D_{\textup{b}} L_{\textup{b}}^\TRANS  \pi_t \FB_\mathcal{S^\perp}\right]\\
	&- \left[L_{\textup{a}} \BI +\|\FA_\mathcal{S^\perp}\|_{\textup{op}} D_{\textup{a}}\BI +D_{\textup{b}}L_{\textup{b}}^\TRANS \pi_t \|\FB_\mathcal{S^\perp}\|_{\textup{op}}\BI \right]\\
	 =& D_{\textup{a}}(\FA_\mathcal{S^\perp}-\|\FA_\mathcal{S^\perp}\|_{\textup{op}} \BI)- D_{\textup{b}}L_{\textup{b}}^\TRANS \pi_t(\FB_\mathcal{S^\perp}+\|\FB_\mathcal{S^\perp}\|_\textup{op} \BI)\\
\end{aligned}
\end{equation*}
are always non-positive for all $t\in[0,T]$.
\begin{figure*}[!t]
\centering
  \subfloat[{Graphon approximate control and optimal control}]{\includegraphics[width=9cm,trim = {1.5cm 0 1cm 0}, clip]{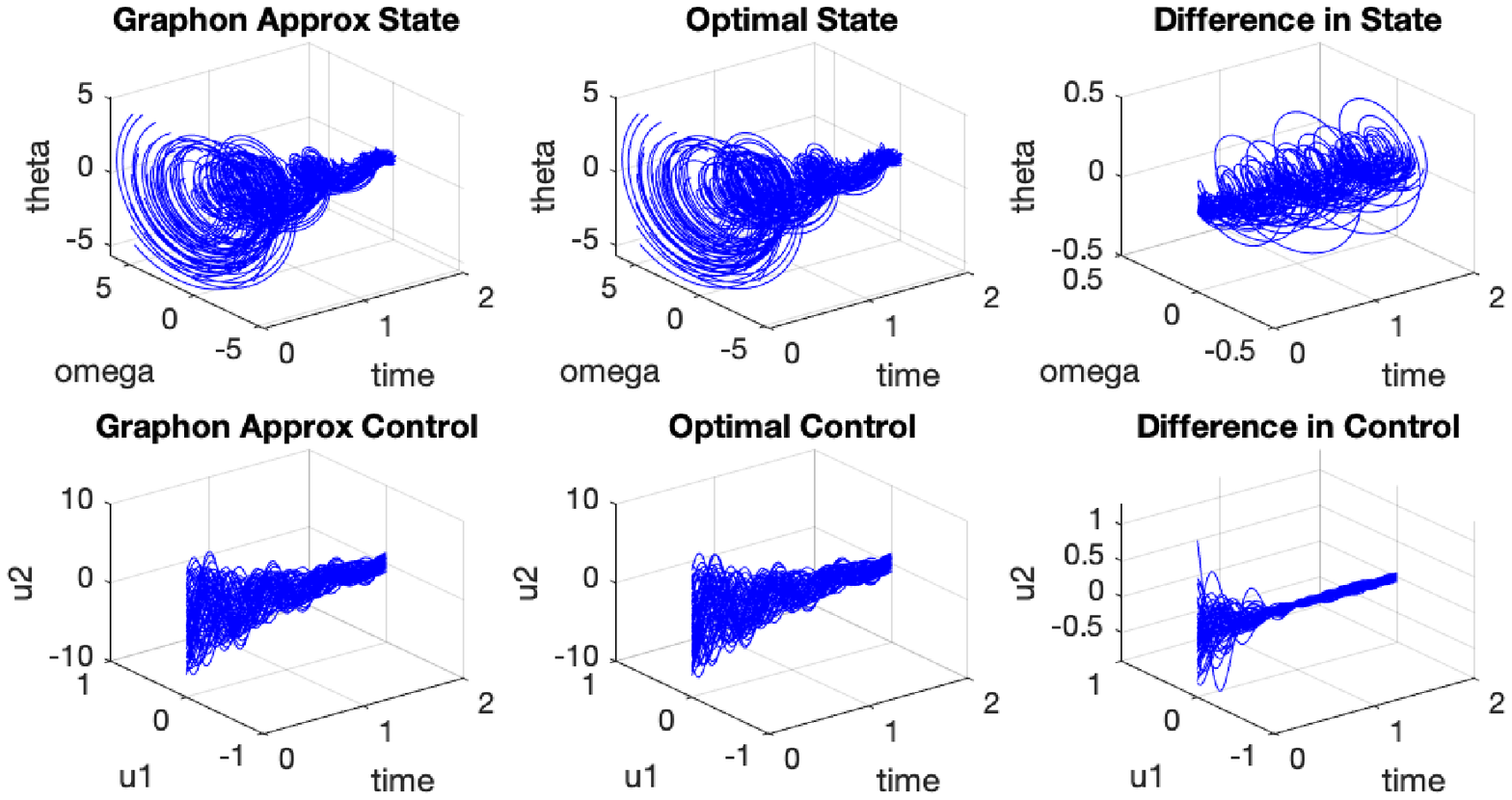}}
  \subfloat[{Projection-based approximate control and optimal control}]{\includegraphics[width=9cm,trim = {1.5cm 0 1cm 0}, clip]{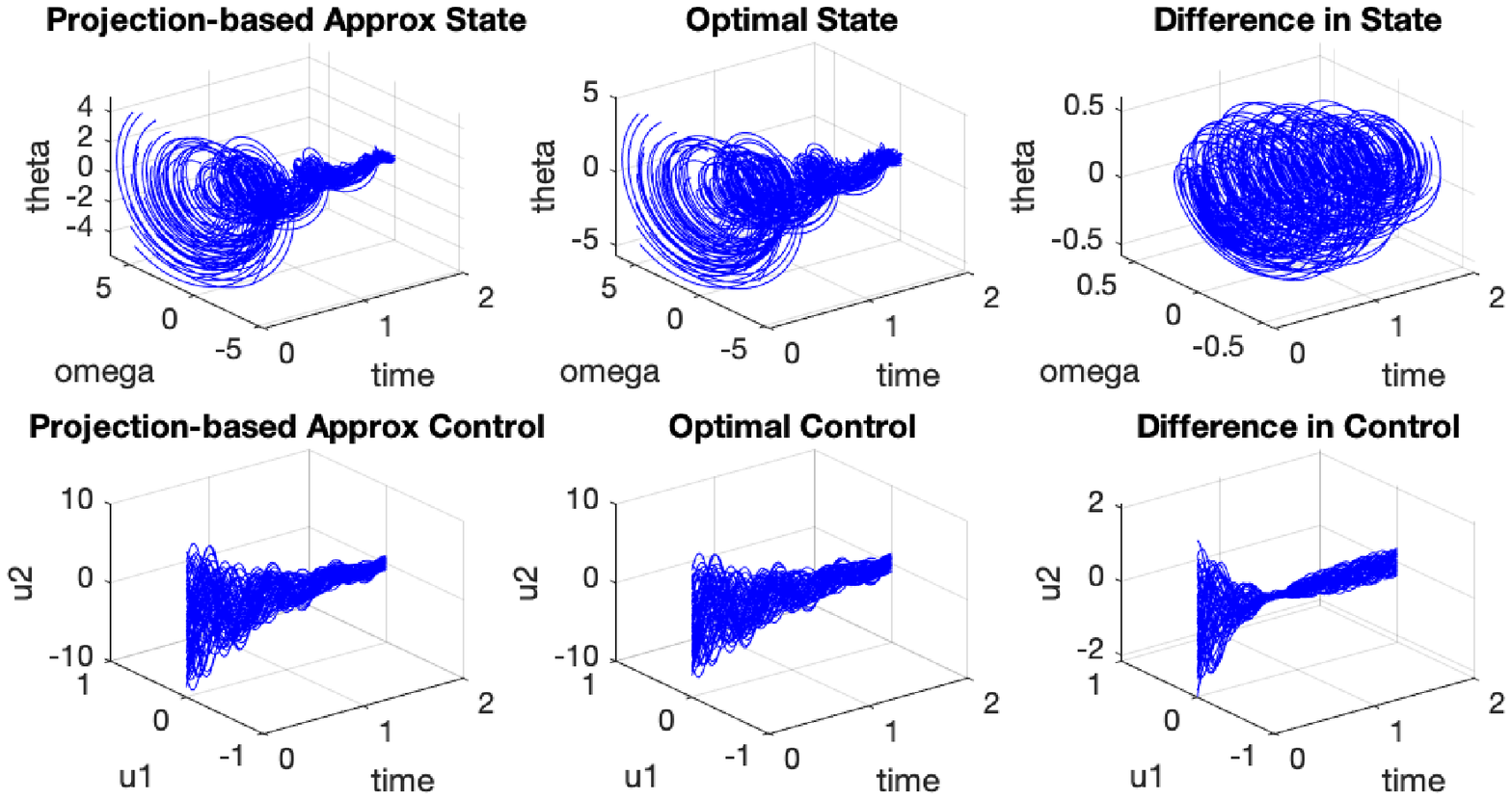}}

  \caption{{Graphon approximate control following \cite{ShuangPeterTAC18} and projection-based approximate control following Section \ref{sec:approxmiate-control} are applied to a network of 60 harmonic oscillators. 
    The second columns in (a) and (b) denote the  trajectories of state and control under the optimal control, and the last columns in (a) and (b) present the differences between trajectories under approximate control and those under optimal control.   
    In (a), the $L^2([0,T];\BR^{120})$ norm of the state difference in the last column is $5.87\%$ of that of the optimal state trajectory in the second column, and the approximate control cost is $0.461\%$ higher than the optimal cost. In (b), the $L^2([0,T];\BR^{120})$ norm of the state difference is $14.82\%$ of that of the optimal state trajectory, and the approximate control cost is $3.356\%$ higher than the optimal cost.    }}
    \label{fig:coupled-ocillators}
\end{figure*}

When (A5)-(ii) also holds, this approximate control implementation recovers the exact optimal control in Theorem \ref{thm:main-result}. {Furthermore, an approximate collaborative control similar to that in Corollary \ref{cor:local collaborative} may be generated by simply replacing $\breve \Fu^{o\gamma}$ there with the approximate auxiliary control $\breve \Fu^{app~\gamma}$ in \eqref{eq:approx-opt-control-aux} for all $\gamma \in [0,1]$.}

A numerical illustration is shown in Fig. \ref{fig:approximate-control-simulation}, where the underlying network (or graphon) couplings contain uncertainties and are generated from a stochastic block (graphon) model  as in Fig. \ref{fig:graphs-graphon}. These networks can be well approximated by low-rank models 
and there is usually a clear spectral gap between the most significant eigenvalues and the rest.  The size of the network in the illustative example is $120$. Based on low-rank approximations, the approximate control is generated and implemented. 
{The parameters in the simulation are: $L_{\textup{a}} = 2,   L_{\textup{b}}= 1.2,  L_{\textup{q}_\textup{T}} = 2$, $ L_{\textup{q}}= D_{\textup{a}}=D_{\textup{b}}=D_{\textup{q}}=D_{\textup{q}_\textup{T}}=1$.  The underlying network (or graphon) couplings $\FA$ and $\FB$ are generated from the stochastic block model in Fig. \ref{fig:graphs-graphon} and $\FQ=\FQ_T = \FA$. The subspace corresponding to the three most significant eigenvalues of $\FA$ is considered. The  operator norms $ \|\FA_{\mathcal{S}^\perp}\|_{\textup{op}}= \|\FQ_{\mathcal{S}^\perp}\|_{\textup{op}}=\|\FQ_{T\mathcal{S}^\perp}\|_{\textup{op}}=0.058$ and $    \|\FB_{\mathcal{S}^\perp}\|_{\textup{op}}= 0.076$ are assumed available for the computation of the control law. {
The initial conditions for all agents are uniformly sampled from $[-5,+5]$. 
Each of the normalized eigenvectors associated with the three most significant eigenvalues of graphs  in Fig.~\ref{fig:graphs-graphon} contains roughly 3 block structures. The projected states in each direction 
correspond roughly to the weighted sums of the block averages of initial states. Therefore, in this simulation example, the initial values of the subspace projected states are often small compared to the initial values of the actual states and the auxiliary states (see Fig. \ref{fig:approximate-control-simulation}). 
}
}

\section{{Regulating Coupled Harmonic Oscillators}}\label{sec:harmonic}
Consider a very large-scale network of coupled harmonic oscillators
\begin{equation}
  \dot{x}_t^i = \alpha \begin{bmatrix} 0 &1 \\-1 &0 \end{bmatrix} x_t^i + \frac{1}{N}\sum_{j=1}^N a_{ij} x_t^j + \beta \begin{bmatrix}0 &0 \\0 & 1 \end{bmatrix} u^i_t,
\end{equation}
where $\alpha, \beta \in \BR_+$,  $x_t^i, u_t^i  \in \BR^2$. Here $ \alpha $ represents the natural frequency of the harmonic oscillators,   $x_t^i \triangleq [\theta_t^i, \omega_t^i]^\TRANS$ is the state (which  may represent, for instance, location and velocity) and the second component of $u_t^i$ represents the  input force of the $i$th harmonic oscillator. 
The objective is to design a control law that minimizes the following cost with network couplings:
\begin{equation*}
\begin{aligned}
  J(u) =& 
   \frac1N\sum_{i=1}^N \Big \{\int_0^T\Big[(x_t^i - \eta z_t^i )^\TRANS Q(x_t^i - \eta z_t^i)  + (u_t^i)^\TRANS R u_t^i\Big] dt
 \\& \qquad  + (x_T^i - \eta z_T^i)^\TRANS Q_T(x_T^i - \eta z_T^i)\Big \},
\end{aligned}
\end{equation*}
where $z_t^i =  \frac{1}{N}\sum_{j=1}^N a_{ij} x^j_t$,  $Q,Q_T\geq 0$ and $R>0$.  Denote 
$$
L_{\textup{a}}=\begin{bmatrix} 0 &\alpha \\-\alpha &0 \end{bmatrix}, ~~ L_{\textup{b}} = \begin{bmatrix} 0 &0 \\0 &\beta \end{bmatrix}.$$

Assume the underlying graph  lies in a sequence of graphs which converges to some graphon limit, as depicted  by the  sequence of graphs shown in Fig.~\ref{fig:graphs-graphon}. One can then formulate the limit graphon LQR problem for systems distributed on the underlying graph.
Adopting Assumptions (A1)-(A5), and based upon the subspace decompositions introduced above, the optimal control for the limit problem is given by %
\begin{equation}\label{eq:HMO-control-sol}
\begin{aligned}
   \Fu_t (\gamma) &= - L_{\textup{b}}^\TRANS \left[\breve \Pi_{t} \breve \Fx_t(\gamma) + \sum_{\ell =1}^d \Pi_{t}^\ell  \Fx_t^\ell(\gamma)\right] \\
\end{aligned}
\end{equation}
where $\gamma \in  [0,1]$ represents a agent in the network with state $\Fx_t(\gamma)\in \BR^2$ and control $\Fu_t (\gamma)\in \BR^2$,  $\breve \Pi$ and $\Pi^\ell$ are the solutions to the following matrix Riccati equations
\begin{equation}\label{eq:HMO-ex-sol}
\begin{aligned}
  &-\dot{ \breve \Pi}_t = L_{\textup{a}}^\TRANS \breve \Pi_t + \breve \Pi_t L_{\textup{a}} -  \breve \Pi_t L_{\textup{b}} L_{\textup{b}}^\TRANS \breve \Pi_t + Q, \\
  & -\dot{\Pi}^\ell_t = (L_{\textup{a}} + \lambda_\ell I)^\TRANS  \Pi_t^\ell  \\& \qquad ~+   \Pi_t^\ell (L_{\textup{a}} + \lambda_\ell I)^\TRANS- \Pi_t^\ell L_{\textup{b}} L_{\textup{b}}^\TRANS \Pi_t^\ell + (1-\eta\lambda_\ell)^2Q, \\
  &\quad \breve \Pi_T = Q_T,\quad  \Pi_T^\ell = (1-\eta\lambda_\ell)^2Q_T,\quad  1 \leq \ell \leq d.
\end{aligned}
\end{equation}

Two alternatives for generating control laws are possible: 
\begin{itemize}
	\item[(i)] Following the graphon control methodology in \cite{ShuangPeterTAC18}, the limit graphon control in \eqref{eq:HMO-control-sol} can then be applied to systems on networks of arbitrary sizes in the convergence sequence;
	\item[(ii)] The projection-based approximate control solution  given by \eqref{eq:approx-control-total} in Section \ref{sec:approxmiate-control}  provides an alternative to generate an approximate control for a finite network system.
\end{itemize}

{Numerical simulations based on (i) the graphon control methodology in \cite{ShuangPeterTAC18}, and (ii) the projection-based approximate control implementation in Section \ref{sec:approxmiate-control}, are presented in Fig.~\ref{fig:coupled-ocillators}.} 
For these simulations, we set the following parameters: 
$$  \alpha =10;~\beta =1.5;~Q=I;~Q_T=2I;~R=I;~\eta = 3; ~N =60. $$
{The time interval $[0,T]$ with $T=2$ is discretized into $200$ time steps.} The initial conditions for all agents are uniformly sampled from $[-5,+5]$. 
The couplings are represented by a graph in a convergent sequence generated  from the stochastic block model as in Fig.~\ref{fig:graphs-graphon}. Note that the rank of the limit graphon (i.e., the step function graphon that corresponds to the block matrix) for the particular example is $3$. 
{The projection-based approximate control method employs  projections into the three most significant eigendirections. In addition, the residual operators used in the projection-based approximate control in  Fig. \ref{fig:coupled-ocillators}(b) are $\FA_{\SBS^\perp}$,  $\FB_{\SBS^\perp}$, $\FQ_{\SBS^\perp} =\FQ_{T\SBS^\perp}= (\BI-\eta \FA_{\SBS^\perp})^2-\BI$ with  $\|\FA_{\SBS^\perp}\|_{\textup{op}} =  0.077$, $\|\FB_{\SBS^\perp}\|_{\textup{op}}=0$, and $\|\FQ_{\SBS^\perp} \|_{\textup{op}}=\|\FQ_{T\SBS^\perp}\|_{\textup{op}}=0.472$.}

{Each of the approximate solutions involves solving one $2\times 2$ Riccati equation and one $6\times 6$ Riccati equation, which can be further decomposed into $4$ decoupled Riccati equations of dimension $2\times 2$ as in \eqref{eq:HMO-ex-sol}. The corresponding actual computation is more than 29 times faster than solving $120\times 120$ dimensional Riccati equation required by a direct solution in the  simulation. The computation saving becomes more significant for network systems with larger sizes in the convergence sequence. }

\section{Discussion}
LQR problems on VLSNs of arbitrary sizes can be approximately solved by low-complexity methods based on subspace decompositions of graphon dynamical systems in two ways:
\begin{enumerate}
    \item[(i)] Following the graphon control methodology proposed in \cite{ShuangPeterTAC18}, the control law for the limit graphon system is employed to generate approximate controls for network systems that are in a sequence that converges to the limit system \cite{ShuangPeterTAC18} as illustrated in Fig. \ref{fig:coupled-ocillators}(a);
    \item[(ii)]  Any finite network LQR problem interpreted as a special case of graphon LQR problem can be solved  via a representation of the underlying graphons by step functions with $N \times N$ blocks where $N$ is the size of the network following Section \ref{subsec:rel-finite-with-graphon}. This is illustrated in Fig. \ref{fig:coupled-ocillators}(b) based on the projection-based approximate control in Section \ref{sec:approxmiate-control}. 
\end{enumerate}
Each of the alternative methods above involves solving two decoupled LQR problems where one requires solving a Riccati equation of dimension ${nd\times nd}$ and the other requires solving a Riccati equation of dimension $n\times n$. 
As a comparison, a direct approach to the solution of LQR problems on networks with $N$ agents requires solving a Riccati equation of dimension $nN \times nN$. Since $N\geq d$ and in some cases $N\gg d$, the solution method may lead to significant computational savings depending upon the underlying network property.  Furthermore, the method proposed here is potentially scalable since its complexity does not directly depend on the size of the network, {as illustrated by the harmonic oscillator example in Section~\ref{sec:harmonic}.} 

\section{Conclusion}
{This article proposes solutions to a class of graphon LQR problems based on invariant subspace decompositions where the couplings appear in states, controls and cost, and these couplings may be represented by different graphons.} Future directions of this line of research include the following: 1) the case with heterogeneous parameters for local dynamics, 2) problems with nonlinear local dynamics,  3) the study of receding horizon control with quadratic cost based on graphon approximations and 4) the relation between graphon dynamical systems and systems described by partial differential equations.

\bibliographystyle{IEEEtran}
\bibliography{mybib}

\begin{thebibliography}{10}
\providecommand{\url}[1]{#1}
\csname url@samestyle\endcsname
\providecommand{\newblock}{\relax}
\providecommand{\bibinfo}[2]{#2}
\providecommand{\BIBentrySTDinterwordspacing}{\spaceskip=0pt\relax}
\providecommand{\BIBentryALTinterwordstretchfactor}{4}
\providecommand{\BIBentryALTinterwordspacing}{\spaceskip=\fontdimen2\font plus
\BIBentryALTinterwordstretchfactor\fontdimen3\font minus
  \fontdimen4\font\relax}
\providecommand{\BIBforeignlanguage}[2]{{%
\expandafter\ifx\csname l@#1\endcsname\relax
\typeout{** WARNING: IEEEtran.bst: No hyphenation pattern has been}%
\typeout{** loaded for the language `#1'. Using the pattern for}%
\typeout{** the default language instead.}%
\else
\language=\csname l@#1\endcsname
\fi
#2}}
\providecommand{\BIBdecl}{\relax}
\BIBdecl

\bibitem{ShuangPeterCDC17}
S.~Gao and P.~E. Caines, ``The control of arbitrary size networks of linear
  systems via graphon limits: An initial investigation,'' in \emph{Proceedings
  of the 56th IEEE Conference on Decision and Control (CDC)}, Melbourne,
  Australia, December 2017, pp. 1052--1057.

\bibitem{ShuangPeterCDC19W2}
------, ``Optimal and approximate solutions to linear quadratic regulation of a
  class of graphon dynamical systems,'' in \emph{Proceedings of the 58th IEEE
  Conference on Decision and Control (CDC)}, Nice, France, December 2019, pp.
  8359--8365.

\bibitem{ShuangPeterTAC18}
------, ``Graphon control of large-scale networks of linear systems,''
  \emph{IEEE Transactions on Automatic Control}, vol.~65, no.~10, pp.
  4090--4105, 2020.

\bibitem{liu2011controllability}
Y.-Y. Liu, J.-J. Slotine, and A.-L. Barab{\'a}si, ``Controllability of complex
  networks,'' \emph{Nature}, vol. 473, no. 7346, pp. 167--173, 2011.

\bibitem{pasqualetti2014controllability}
F.~Pasqualetti, S.~Zampieri, and F.~Bullo, ``Controllability metrics,
  limitations and algorithms for complex networks,'' \emph{IEEE Transactions on
  Control of Network Systems}, vol.~1, no.~1, pp. 40--52, 2014.

\bibitem{chen2017pinning}
G.~Chen, ``Pinning control and controllability of complex dynamical networks,''
  \emph{International Journal of Automation and Computing}, vol.~14, no.~1, pp.
  1--9, 2017.

\bibitem{olfati2007consensus}
R.~Olfati-Saber, J.~A. Fax, and R.~M. Murray, ``Consensus and cooperation in
  networked multi-agent systems,'' \emph{Proceedings of the IEEE}, vol.~95,
  no.~1, pp. 215--233, 2007.

\bibitem{arenas2008synchronization}
A.~Arenas, A.~D{\'\i}az-Guilera, J.~Kurths, Y.~Moreno, and C.~Zhou,
  ``Synchronization in complex networks,'' \emph{Physics Reports}, vol. 469,
  no.~3, pp. 93--153, 2008.

\bibitem{li2011ensemble}
J.-S. Li, ``Ensemble control of finite-dimensional time-varying linear
  systems,'' \emph{IEEE Transactions on Automatic Control}, vol.~56, no.~2, pp.
  345--357, 2011.

\bibitem{yong2013linear}
J.~Yong, ``Linear-quadratic optimal control problems for mean-field stochastic
  differential equations,'' \emph{SIAM journal on Control and Optimization},
  vol.~51, no.~4, pp. 2809--2838, 2013.

\bibitem{arabneydi2016team}
J.~Arabneydi and A.~Mahajan, ``Linear quadratic mean field teams: Optimal and
  approximately optimal decentralized solutions,'' \emph{arXiv preprint
  arXiv:1609.00056v2}, 2017.

\bibitem{zecevic2005global}
A.~I. Zecevic and D.~D. Siljak, ``Global low-rank enhancement of decentralized
  control for large-scale systems,'' vol.~50, no.~5, pp. 740--744, 2005.

\bibitem{hamilton2012patterned}
S.~C. Hamilton and M.~E. Broucke, ``Patterned linear systems,''
  \emph{Automatica}, vol.~48, no.~2, pp. 263--272, 2012.

\bibitem{borgs2008convergent}
C.~Borgs, J.~T. Chayes, L.~Lov{\'a}sz, V.~T. S{\'o}s, and K.~Vesztergombi,
  ``Convergent sequences of dense graphs i: Subgraph frequencies, metric
  properties and testing,'' \emph{Advances in Mathematics}, vol. 219, no.~6,
  pp. 1801--1851, 2008.

\bibitem{borgs2012convergent}
------, ``Convergent sequences of dense graphs ii. multiway cuts and
  statistical physics,'' \emph{Annals of Mathematics}, vol. 176, no.~1, pp.
  151--219, 2012.

\bibitem{lovasz2012large}
L.~Lov{\'a}sz, \emph{{Large Networks and Graph Limits}}.\hskip 1em plus 0.5em
  minus 0.4em\relax American Mathematical Soc., 2012, vol.~60.

\bibitem{bensoussan2007representation}
A.~Bensoussan, G.~Da~Prato, M.~C. Delfour, and S.~Mitter, \emph{Representation
  and Control of Infinite Dimensional Systems}, 2nd~ed.\hskip 1em plus 0.5em
  minus 0.4em\relax Springer Science \& Business Media, 2007.

\bibitem{curtain1995introduction}
R.~F. Curtain and H.~Zwart, \emph{An Introduction to Infinite-Dimensional
  Linear Systems Theory}.\hskip 1em plus 0.5em minus 0.4em\relax Springer
  Science \& Business Media, 1995, vol.~21.

\bibitem{ShuangAdityaCDC19}
S.~Gao and A.~Mahajan, ``Networked control of coupled subsystems: Spectral
  decomposition and low-dimensional solutions,'' in \emph{Proceedings of the
  58th IEEE Conference on Decision and Control (CDC)}, Nice, France, December
  2019, pp. 4514--4520.

\bibitem{arabneydi2015team}
J.~Arabneydi and A.~Mahajan, ``Team-optimal solution of finite number of
  mean-field coupled {LQG} subsystems,'' in \emph{Proceedings of the 54th IEEE
  Conference on Decision and Control (CDC)}, Dec 2015, pp. 5308--5313.

\bibitem{arabneydi2019deep}
J.~Arabneydi and A.~G. Aghdam, ``Deep teams with risk-sensitive linear
  quadratic models: A gauge transformation,'' \emph{arXiv preprint
  arXiv:1912.03951}, 2019.

\bibitem{showalter2013monotone}
R.~E. Showalter, \emph{Monotone operators in Banach space and nonlinear partial
  differential equations}.\hskip 1em plus 0.5em minus 0.4em\relax American
  Mathematical Soc., 1997, vol.~49.

\bibitem{pazy1983semigroups}
A.~Pazy, \emph{{Semigroups of Linear Operators and Applications to Partial
  Differential Equations}}, ser. Applied Mathematical Sciences.\hskip 1em plus
  0.5em minus 0.4em\relax New York: Springer, 1983.

\bibitem{airoldi2013stochastic}
E.~M. Airoldi, T.~B. Costa, and S.~H. Chan, ``Stochastic blockmodel
  approximation of a graphon: Theory and consistent estimation,'' in
  \emph{Advances in Neural Information Processing Systems}, 2013, pp. 692--700.

\bibitem{sauvigny2012partial}
F.~Sauvigny, \emph{Partial Differential Equations 2: Functional Analytic
  Methods}.\hskip 1em plus 0.5em minus 0.4em\relax Springer Science \& Business
  Media, 2012.

\bibitem{aggarwal2016recommender}
C.~C. Aggarwal \emph{et~al.}, \emph{Recommender systems}.\hskip 1em plus 0.5em
  minus 0.4em\relax Springer, 2016, vol.~1.

\bibitem{arapura2004common}
D.~Arapura and C.~Peterson, ``The common invariant subspace problem: an
  approach via {G}r{\"o}bner bases,'' \emph{Linear algebra and its
  applications}, vol. 384, pp. 1--7, 2004.

\bibitem{drnovvsek2001common}
R.~Drnov{\v{s}}ek, ``Common invariant subspaces for collections of operators,''
  \emph{Integral Equations and Operator Theory}, vol.~39, no.~3, pp. 253--266,
  2001.

\bibitem{jamiolkowski2015generalized}
A.~Jamio{\l}kowski and G.~Pastuszak, ``Generalized shemesh criterion, common
  invariant subspaces and irreducible completely positive superoperators,''
  \emph{Linear and Multilinear Algebra}, vol.~63, no.~2, pp. 314--325, 2015.

\bibitem{liberzon2011calculus}
D.~Liberzon, \emph{Calculus of variations and optimal control theory: a concise
  introduction}.\hskip 1em plus 0.5em minus 0.4em\relax Princeton University
  Press, 2011.

\bibitem{PeterMinyiCDC19GMFG}
P.~E. Caines and M.~Huang, ``Graphon mean field games and the {GMFG} equations:
  $\varepsilon$-{N}ash equilibria,'' in \emph{Proceedings of the 58th IEEE
  Conference on Decision and Control (CDC)}, December 2019, pp. 286--292,
  arXiv:2008.10216.

\bibitem{conway2013course}
J.~B. Conway, \emph{A Course in Functional Analysis}, 2nd~ed.\hskip 1em plus
  0.5em minus 0.4em\relax Springer-Verlag New York, 1990, vol.~96.

\end{thebibliography}

\end{document}